\documentclass[12pt, final]{amsart}
\usepackage[applemac]{inputenc}
\usepackage{bm}
\usepackage{times}
\usepackage{amsthm}
\usepackage{amsbsy}
\usepackage{amstext}
\usepackage{amssymb}
\usepackage{graphicx}
\usepackage{setspace}
\usepackage{url}
\usepackage{esint}
\usepackage[pdftex,bookmarks=true,colorlinks=true,citecolor=blue,plainpages=false]{hyperref}
\usepackage[round]{natbib}
\PassOptionsToPackage{normalem}{ulem}
\usepackage{ulem}
\doublespacing

\newcommand{\mele}[1]{\textcolor{red}{#1}}
\newcommand{\zhu}[1]{\textcolor{blue}{#1}}

\makeatletter

%%%%%%%%%%%%%%%%%%%%%%%%%%%%%% LyX specific LaTeX commands.
%% A simple dot to overcome graphicx limitations

%%%%%%%%%%%%%%%%%%%%%%%%%%%%%% Textclass specific LaTeX commands.
\numberwithin{equation}{section}
\numberwithin{figure}{section}
\theoremstyle{plain}

  \theoremstyle{plain}

%%%%%%%%%%%%%%%%%%%%%%%%%%%%%% User specified LaTeX commands.

%%%%%%%%%%%%%%%%%%%%%%%%%%%%%%%%%%%%%%%%%%%%%%%%%%%%%%%%%%%%%%%%%%%%%%%%%%%%%%%%%%%%%%%%%%%%%%%%%%%%%%%%%%%%%%%%%%%%%%%%%%%%
\usepackage{amsfonts}
\usepackage{graphicx}
\usepackage{color}
\usepackage{bm}
\usepackage[margin=1in]{geometry}
\setcounter{MaxMatrixCols}{10}
\newtheorem{theorem}{THEOREM}

\newtheorem{algorithm}{ALGORITHM}
\newtheorem{axiom}{ASSUMPTION}

\newtheorem{corollary}{COROLLARY}

\newtheorem{example}{Example}

\newtheorem{proposition}{PROPOSITION}
\newtheorem{remark}{Remark}

\numberwithin{definition}{section}
\numberwithin{equation}{section}
\numberwithin{axiom}{section}
\numberwithin{lemma}{section}
\numberwithin{theorem}{section}
\numberwithin{proposition}{section}
\numberwithin{example}{section}
\numberwithin{figure}{section}
\numberwithin{table}{section}
\numberwithin{exercise}{section}
\numberwithin{remark}{section}

\makeatother

  \providecommand{\propositionname}{Proposition}
\providecommand{\theoremname}{Theorem}

\begin{document}

\title{Approximate Variational Estimation for a Model of Network Formation}
\author{Angelo Mele}
\address{Carey Business School\\
Johns Hopkins University \\
100 International Dr \\
Baltimore, MD 21202}
\author{Lingjiong Zhu}
\address{Department of Mathematics\\
Florida State University\\
208 Love Building\\
1017 Academic Way\\
Tallahassee, FL 32306}

\thanks{We are grateful to the editor and three excellent referees for their suggestions. We thank Anton Badev, Vincent Boucher, Aureo DePaula, Bryan Graham, Mert G\"{u}rb\"{u}zbalaban, Matt Jackson, Hiro Kaido, Michael Leung, Xiaodong Liu and Demian Pouzo for comments on previous versions of this paper. The second author is partially supported by NSF Grant DMS-1613164.}

\date{First version: June 15, 2015. This version: \today.}
\begin{abstract}
We develop approximate estimation methods for exponential random graph models (ERGMs), whose likelihood  is proportional to an intractable normalizing constant. The usual approach approximates this constant with Monte Carlo simulations, however convergence may be exponentially slow.  We propose a deterministic method, based on a variational mean-field approximation of the ERGM's normalizing constant. We compute lower and upper bounds for the approximation error for any network size, adapting nonlinear large deviations results. This translates into bounds on the distance between true likelihood and mean-field likelihood. Monte Carlo simulations suggest that in practice our deterministic method performs better than our conservative theoretical approximation bounds imply, for a large class of models. 

\textit{Keywords: Networks, Microeconometrics, Large Networks,
Variational Inference, Large deviations, Mean-Field Approximations}
\end{abstract}
\maketitle

\section{Introduction}
 
This paper studies variational mean-field methods to approximate the likelihood of exponential random graph models (ERGMs), a class of statistical network formation models that has become popular in sociology, machine learning, statistics and more recently economics. While a large part of the statistical network literature is devoted to models with unconditionally or conditionally independent links \citep{Graham2014,AiroldiEtAl2008,BickelEtAl2013}, ERGMs allow for conditional and unconditional dependence among links \citep{Snijders2002,WassermanPattison1996}. These models have recently gained attention in economics, because several works have shown that ERGMs have a microeconomic foundation. In fact, the ERGM likelihood naturally emerges as the stationary equilibrium of a potential game, where players engage in a myopic best-response dynamics of link formation \citep{Blume1993, Mele2010a, Badev2013, Chandrasekhar2016, ChandrasekharJackson2012, BoucherMourifie2017}, and in a large class of evolutionary games and social interactions models \citep{Blume1993, DurlaufEtAl2010}. 

Estimation and inference for ERGMs are challenging, because the likelihood of the observed network is proportional to an intractable normalizing constant, that cannot be computed exactly, even in small networks. Therefore, exact Maximum Likelihood estimation (MLE) is infeasible. The usual estimation approach, the Markov Chain Monte Carlo MLE (MCMC-MLE), consists of simulating many networks using the model's conditional link probabilities and approximating the constant and the likelihood with Monte Carlo methods \citep{Snijders2002, Koskinen2004, DiaconisChatterjee2011, Mele2010a}. Estimates of the MCMC-MLE converge almost surely to the MLE if the likelihoods are well-behaved \citep{GeyerThompson1992}. However, a recent literature has shown that the simulation methods used to compute the MCMC-MLE may have exponential slow convergence, making estimation and approximation of the likelihood impractical or infeasible for a large class of ERGMs \citep{BhamidiEtAl2011, DiaconisChatterjee2011, Mele2010a}. An alternative is the Maximum Pseudo-likelihood estimator (MPLE), that finds the parameters that maximize the product of the conditional link probabilities of the model.
While MPLE is simple and computationally fast, the properties of the estimator are not well understood, except in special cases, when some regularity conditions are satisfied \citep{BoucherMourifie2017,Besag1974}; in practice MPLE may give misleading estimates when the dependence among links is strong \citep{GeyerThompson1992}. Furthermore, since the ERGMs are exponential families, networks with the same sufficient statistics will produce the same MLE, but may have different MPLE.

Our work departs from the standard methods of estimation, proposing deterministic approximations of the likelihood, based on the approximated solution of a variational problem. Our strategy is to use a mean-field algorithm to approximate the normalizing constant of the ERGM, at any given parameter value \citep{WainwrightJordan2008,Bishop2006,DiaconisChatterjee2011}. We then maximize the resulting approximate log-likelihood, with respect to the parameters. To be concrete, our approximation consists of using the likelihood of a simpler model with independent links to approximate the constant of the ERGM. The mean-field approximation algorithm finds the likelihood with independent links that minimizes the Kullback-Leibler divergence from the ERGM likelihood. Using this likelihood with independent links, we compute an approximate normalizing constant. We then evaluate the log-likelihood of our model, where the exact normalizing constant is replaced by its mean-field approximation.

Our main contribution is the computation of exact bounds for the approximation error of the normalizing constant's mean-field estimate. Our proofs use the theoretical machinery of \cite{ChatterjeeDembo2014} for non-linear large deviations in models with intractable normalizing constants. Using this powerful tool, we provide explicit lower and upper bounds to the error of approximation for the  mean-field normalizing constant. The bounds depend on the magnitude of the parameters of our model and the size of link externalities \citep{Mele2010a, BoucherMourifie2017, Chandrasekhar2016, DePaula2017}. The result holds for dense and moderately sparse networks.
Remarkably and conveniently the mean-field error  converges to zero as the network becomes large. This guarantees that for large networks,  the log-normalizing constant of an ERGM is well approximated by our mean-field log-normalizing constant. 

The main implication of our main result is that we can compute bounds to the distance between the log-likelihood of the ERGM and our approximate log-likelihood;  these also converge in sup-norm as the network grows large. As a consequence, we can use the approximated likelihood for estimation in large networks. If the likelihood is strictly concave, it is possible to show that our approximate estimator converges to the maximum likelihood estimator as long as the network grows large.  Furthermore, because our bounds may not be sharp, in practice convergence could be faster than what is implied in these results.

While our method is guaranteed to perform well in large graphs,  many applications involve small networks. For example, the school networks data in the National Longitudinal Study of Adoloscent Health (Add Health)  \citep{ BoucherMourifie2017, Moody2001, Badev2013} or the Indian villages in \cite{BanerjeeEtAl2013} include on average about 200-300 nodes. 
To understand the performance of our estimator in practice,  we perform simple Monte Carlo exercises in networks with few hundreds nodes, comparing mean-field estimates to  MCMC-MLE and MPLE. Our Monte Carlo results show that in practice our estimator works better than the theoretical results suggest, for  networks with $50$ to $1000$ nodes. The median mean-field approximation point estimates are close to the true parameters, but exhibit a small bias. Both MCMC-MLE and MPLE show a larger variability of point estimates for the two-stars and triangle parameters, measured as median absolute deviation. When we increase the network size, all three estimators improve, as expected. We conclude that our method's performance is comparable to available estimators in small networks. While our code can be made faster by exploiting efficient matrix algebra libraries and parallelization, the CPU time for estimation is comparable to the estimators implemented in the \texttt{ergm} package in R for networks with less than 200 nodes. 

The main message of our theoretical results and Monte Carlo simulations is that the approximate mean-field approach is a valid alternative to existing methods for estimation of a large class of ERGMs. We note that our theoretical bounds may not be sharp, and in practice the mean-field algorithm may have better performance than what is implied by our conservative results, as confirmed by our Monte Carlo experiments.

To the best of our knowledge, this paper is one of the first works in economics to use mean-field approximations for approximate estimation of complex models. We show that our application of variational approximations has theoretical guarantees, and we can bound the error of approximation. While similar deterministic methods have been used to provide an approximation to the normalizing constant of the ERGM model \citep{DiaconisChatterjee2011, AmirPu2012, Mele2010a, ZhengHe2013, AristoffZhu2014}, we are the first to characterize the variational approximation error for a model with covariates and its computational feasibility.

Our technique can be applied to other models in economics and social sciences. For example, models of social interactions with binary decisions like in \cite{Blume1993}, \cite{Badev2013}, \cite{DurlaufEtAl2010}, models for bundles \citep{FoxLazzati2017}, and models of choices from menus \citep{MisraEtAl2018} have similar likelihoods with intractable normalizing constants . Therefore our method of approximation may allow estimation of these models for large sets of bundles or menu choices.

The rest of the paper is organized as follows. Section 2 presents the theoretical model and variational approximations. Section 3 contains the main theoretical results and the error bounds. Section 4 presents the Monte Carlo results and Section 5 concludes. All the proofs and additional Monte Carlo simulations are in the Appendix.
Additional results and discussions are presented in the Online Appendix.

%%%%%%%%%%%%%%%%%%%%%%%%%%%%%%%%%%%%%%%%%%%%%%%%%%%%%
% THEORETICAL MODEL
%%%%%%%%%%%%%%%%%%%%%%%%%%%%%%%%%%%%%%%%%%%%%%%%%%%%%

\section{Network formation model and variational methods}
\subsection{Exponential random graph models} 
The class of exponential random graphs is an important generative
model for networks and has been extensively used
in applications in many disciplines \citep{WassermanPattison1996, Jackson2008, DePaula2017, Mele2010a, Moody2001, WimmerLewis2010, AmirPu2012}. In this paper we consider a model with nodal covariates, two-stars and triangles. 

Our model assumes that the network consists of $n$ heterogeneous nodes, indexed by $i=1,...,n$; each node is characterized by a $S$-dimensional vector of observed attributes
$\tau_i\in \mathcal{X} := \otimes_{j=1}^{S}\mathcal{X}_{j}$, $i=1,...,n$. 
The sets $\mathcal{X}_{j}$ can represent age, race, gender,
income, etc.\footnote{For instance, if we consider gender and income, 
then $S=2$, and we can take $\otimes_{j=1}^{2}\mathcal{X}_{j}=\{\text{male,female}\}\times\{\text{low, medium, high}\}$.
The sets $\mathcal{X}_{j}$ can be both discrete and continuous. For example, if we consider gender and income, 
we can also take $\otimes_{j=1}^{2}\mathcal{X}_{j}=\{\text{male,female}\}\times\text{[\$50,000,\$200,000]}$. Below we restrict the covariates to be discrete, but we allow the number of types to grow with the size of the network.}
 Let $\alpha$ be a $n\times n$ symmetric matrix  with elements $\alpha_{ij}:=\nu(\tau_i,\tau_j)$, where  $\nu: \mathcal{X} \times \mathcal{X} \rightarrow\mathbb{R}$ is a symmetric function and let $\beta$ and $\gamma$ be scalars. 
For ease of exposition we focus on the case in which the attributes are discrete and finite, but our results hold when this assumption is relaxed and the number of attributes is allowed to increase with the size of the network.

The  likelihood $\pi_n(g,\alpha, \beta, \gamma)$ of observing the adjacency matrix $g$ depends on the composition of links, the number of two-stars and the number of triangles
\begin{equation}
\label{eq:stat_distribution_canonical}
\pi_{n}(g;\alpha,\beta,\gamma)=\frac{\exp\left[Q_{n}(g;\alpha,\beta,\gamma)\right]}
{\sum_{\omega\in\mathcal{G}_n}\exp\left[Q_{n}(\omega;\alpha,\beta,\gamma)\right]},
\end{equation}
where the function $Q$ is called a \emph{potential function} and takes the form
\begin{equation}
\label{eq:potential_triangles}
Q_{n}(g;\alpha,\beta,\gamma)=\sum_{i=1}^{n}\sum_{j=1}^{n}\alpha_{ij}g_{ij}+\frac{\beta}{2n}\sum_{i=1}^{n}\sum_{j=1}^{n}\sum_{k=1}^{n}g_{ij}g_{jk} 
+ \frac{2\gamma}{3n}\sum_{i=1}^{n}\sum_{j=1}^{n}\sum_{k=1}^{n}g_{ij}g_{jk}g_{ki} .
\end{equation}
and $c(\alpha, \beta, \gamma):=\sum_{\omega\in\mathcal{G}_n}\exp\left[Q_{n}(\omega;\alpha,\beta,\gamma)\right]$ is a normalizing constant that guarantees that likelihood \eqref{eq:stat_distribution_canonical} is a proper distribution. 
The second and third term of the potential function \eqref{eq:potential_triangles} are the counts of two-stars and triangles in the network, rescaled by $n$. We rewrite  \eqref{eq:stat_distribution_canonical} as 
\begin{equation}
\label{eq:stat_distribution}
\pi_{n}(g;\alpha,\beta,\gamma)=\exp\left\{ n^{2}\left[T_{n}(g;\alpha,\beta,\gamma)-\psi_{n}(\alpha,\beta,\gamma)\right]\right\},
\end{equation}
where $T_{n}(g;\alpha,\beta,\gamma)= Q_{n}(g;\alpha,\beta,\gamma)n^{-2}$ is the potential scaled by $n^2$ and the log-normalizing constant (scaled by $n^2$) is ,
\begin{equation}\label{psi_n}
\psi_{n}(\alpha,\beta,\gamma)=\frac{1}{n^{2}}\log\sum_{\omega\in\mathcal{G}_n}\exp\left[n^{2}T_{n}(\omega;\alpha,\beta,\gamma)\right],
\end{equation}
and $\mathcal{G}_n:=\{\omega=(\omega_{ij})_{1\leq i,j\leq n}:\omega_{ij}=\omega_{ji}\in\{0,1\}, \omega_{ii}=0, 1\leq i,j\leq n\}$ is the set of all binary matrices with $n$ nodes.
The re-scaling of the potential and the log-normalizing constant is necessary for the asymptotic results, to avoid the explosion of 
the potential function as the size of the network grows large.

\subsection{Microeconomic equilibrium foundations} ERGMs caught the attention of economists because recent works proves a behavioral and equilibrium interpretation of likelihood (\ref{eq:stat_distribution}).\footnote{\cite{Butts2009}, \cite{Mele2010a}, \cite{ChandrasekharJackson2012}, \cite{BoucherMourifie2017}, \cite{Badev2013}, \cite{DePaula2017}.  }
In fact, these likelihoods naturally arise as equilibrium of best-response dynamics in potential games \citep{Blume1993, MondererShapley1996, Butts2009, Mele2010b}. 

To be concrete, let's consider the following game. Players' payoffs are a function of the composition of direct links, friends' popularity and the number of common friends. The utility of network $g$ for player $i$ is given by
\begin{equation}
\label{eq:utility_fcn_triangles}
u_{i}(g,\tau)=\sum_{j=1}^{n}\alpha_{ij}g_{ij}
+\frac{\beta}{n}\sum_{j=1}^{n}\sum_{k=1}^{n}g_{ij}g_{jk} 
+ \frac{\gamma}{n}\sum_{j=1}^{n}\sum_{k=1}^{n}g_{ij}g_{jk}g_{ki},
\end{equation}
Each player forms links with other nodes, maximizing utility \eqref{eq:utility_fcn_triangles}, but taking into account the strategies of other players. We can show that this game of network formation converges to an exponential random graph in a stationary equilibrium, under the following assumptions:\footnote{See \cite{Mele2010a} or \cite{Badev2013} for more technical details and variants of these assumptions. See also \cite{Chandrasekhar2016}, \cite{DePaula2017}, \cite{ChandrasekharJackson2012}, \cite{BoucherMourifie2017}. } (1) the network formation is sequential, with only two active players in each period; (2) two players meet over time with probability $\rho_{ij} := \rho(\tau_i,\tau_j, g_{-ij}) > 0$, where $g_{-ij}$ indicate the network $g$ but link $g_{ij}$; and these meetings are i.i.d. over time; (3) before choosing
whether to form or delete a link, players receive an i.i.d. logistic shock $(\varepsilon_{ij1},\varepsilon_{ij0})$.
 At time $t$, the link $g_{ij}^t$ is formed if
\begin{eqnarray*}
u_i(g^{t}_{ij}=1, g^{t-1}_{-ij},\tau) + u_j(g^{t}_{ij}=1,g^{t-1}_{-ij},\tau) + \varepsilon^{t}_{ij1}\geq 
u_i(g^{t}_{ij}=0, g^{t-1}_{-ij},\tau) + u_j(g^{t}_{ij}=0,g^{t-1}_{-ij},\tau) + \varepsilon^{t}_{ij0} .
\end{eqnarray*}

\cite{Mele2010a} shows that such a model is a potential game \citep{MondererShapley1996} with
potential function given by equation (\ref{eq:potential_triangles}). The probability of observing
network $g$ in the long run is given by (\ref{eq:stat_distribution}) (Theorem 1 in \cite{Mele2010a}), thus  (\ref{eq:stat_distribution}) describes the stationary behavior of the model. In the long-run we observe with high probability the pairwise stable networks, where no pair of players want to form or delete a link.\footnote{In the Online Appendix \ref{section:appendix_microfoundation} we provide more details about the microeconomic foundation of the model for the interested reader.}
\subsection{Variational Approximations}
The constant  $\psi_{n}(\alpha,\beta, \gamma)$ in (\ref{psi_n}) is intractable
because it is a sum over all $2^{\binom{n}{2}}$
possible networks with $n$ nodes; if there are $n=10$ nodes, the sum involves computation of $2^{45}$ potential functions, which is  infeasible.\footnote{See \cite{GeyerThompson1992}, \cite{MurrayEtAl2006},
\cite{Snijders2002} for examples.} In the literature on exponential family likelihoods with intractable normalizing constant, this problem is 
solved by approximating the normalizing constant using Markov Chain Monte Carlo \citep{Snijders2002,Mele2010a,GoodreauEtAl2009, Koskinen2004, CaimoFriel2010, MurrayEtAl2006}. However,
\citet{BhamidiEtAl2011} has shown that such methods
may have exponentially slow convergence for many ERGMs specifications. \\
\indent We propose methods that avoid simulations and we find an approximate likelihood $q_{n}(g)$
 that minimizes the Kullback-Leibler divergence $KL(q_{n}\vert\pi_{n})$
between $q_{n}$ and the true likelihood $\pi_{n}$:
\begin{align}\label{KL:lower:bound}
KL(q_{n}\vert\pi_{n}) & =  \sum_{\omega\in\mathcal{G}_n}q_{n}(\omega)\log\left[\frac{q_{n}(\omega)}{\pi_{n}(\omega;\alpha,\beta)}\right]
\nonumber\\
& =  \sum_{\omega\in\mathcal{G}_n}q_{n}(\omega)
\left[\log q_{n}(\omega)
-n^{2}T_{n}(\omega;\alpha,\beta,\gamma)
+n^{2}\psi_{n}(\alpha,\beta,\gamma)\right] \geq 0.
\end{align}
With some algebra we obtain a lower-bound for the constant
$\psi_{n}(\alpha,\beta,\gamma)$
\begin{equation*}
\psi_{n}(\alpha,\beta,\gamma) \geq  \mathbb{E}_{q_n}\left[T_{n}(\omega;\alpha,\beta,\gamma)\right]+\frac{1}{n^{2}}\mathcal{H}(q_{n}):=\mathcal{L}(q_{n}),
\end{equation*}
where $\mathcal{H}(q_{n})=-\sum_{\omega\in\mathcal{G}_n}q_{n}(\omega)\log q_{n}(\omega)$
is the entropy of distribution $q_{n}$, and
$\mathbb{E}_{q_n}\left[T_{n}(\omega;\alpha,\beta,\gamma)\right]$ is the
expected value of the re-scaled potential, computed according to the distribution $q_n$.  \\
\indent To find the best likelihood approximation we minimize
$KL(q_{n}\vert\pi_{n})$ with respect to $q_{n}$, which is equivalent to
finding the supremum of the lower-bound $\mathcal{L}(q_{n})$, i.e.
\begin{equation}
\psi_{n}(\alpha,\beta,\gamma)=\sup_{q_{n}\in\mathcal{Q}_{n}} \mathcal{L}(q_{n})
= \sup_{q_{n}\in\mathcal{Q}_{n}}\left\{ \mathbb{E}_{q_n}\left[T_{n}(\omega;\alpha,\beta,\gamma)\right]+\frac{1}{n^{2}}\mathcal{H}(q_{n})\right\},
\label{eqn:varapproxgeneral}
\end{equation}
where $\mathcal{Q}_{n}$ is the set of all the probability distributions on $\mathcal{G}_n$.
We have transformed the problem of computing an intractable sum into a
variational problem, i.e. a maximization problem. \\
\indent In general, problem
 \eqref{eqn:varapproxgeneral} has no closed-form solution, thus the
literature suggests to restrict $\mathcal{Q}_{n}$
to be the set of all
completely factorized distribution\footnote{See \citet{WainwrightJordan2008}, \citet{Bishop2006}}
\begin{equation}\label{qn:eqn}
q_{n}(g)=\prod_{i,j}\mu_{ij}^{g_{ij}}(1-\mu_{ij})^{1-g_{ij}},
\end{equation}
where $\mu_{ij}=\mathbb{E}_{q_n}(g_{ij})=\mathbb{P}_{q_n}(g_{ij}=1)$. 
This approximation is called a \emph{mean-field approximation} of the
discrete exponential family. Straightforward
algebra shows that the entropy of $q_{n}$ is additive
\begin{equation*}
\frac{1}{n^{2}}\mathcal{H}(q_{n})
=-\frac{1}{2n^{2}}\sum_{i=1}^{n}\sum_{j=1}^{n}\left[\mu_{ij}\log\mu_{ij}+(1-\mu_{ij})\log(1-\mu_{ij})\right],
\end{equation*}
and the expected potential can be computed as
\begin{equation*}
\mathbb{E}_{q_{n}}\left[T_{n}\left(\omega;\alpha,\beta,\gamma\right)\right]
=  \frac{\sum_{i}\sum_{j}\alpha_{ij}\mu_{ij}}{n^{2}}+\beta\frac{\sum_{i}\sum_{j}\sum_{k}\mu_{ij}\mu_{jk}}{2n^{3}}
+\gamma\frac{2\sum_{i}\sum_{j}\sum_{k}\mu_{ij}\mu_{jk}\mu_{ki}}{3n^{3}}.
\end{equation*}
The mean-field approximation leads to a \emph{lower bound of} $\psi_{n}(\alpha,\beta,\gamma)$, because we restricted $Q_n$,
and the simpler variational
problem is to find a $n\times n$ symmetric matrix $\bm{\mu}(\alpha,\beta,\gamma)$ that solves
\begin{align}
\psi_{n}(\alpha,\beta,\gamma) & \geq \psi_{n}^{MF}(\bm{\mu}(\alpha,\beta,\gamma))
\nonumber
\\
&  = \sup_{\bm{\mu}\in [0,1]^{n^{2}}: \mu_{ij}=\mu_{ji},\forall i,j}
 \bigg\lbrace \frac{1}{n^{2}}\sum_{i,j}\alpha_{ij}\mu_{ij}
 +\frac{\beta}{2n^{3}}\sum_{i,j,k}\mu_{ij}\mu_{jk}
 +\frac{2\gamma}{3n^{3}}\sum_{i,j,k}\mu_{ij}\mu_{jk}\mu_{ki}
 \nonumber
 \\
 & \qquad\qquad- \frac{1}{2n^{2}}\sum_{i=1}^{n}\sum_{j=1}^{n}\left[\mu_{ij}\log\mu_{ij}+(1-\mu_{ij})\log(1-\mu_{ij})\right]\bigg\rbrace.
 \label{eqn:mean-fieldproblem}
\end{align}
The mean-field problem is in general \emph{nonconvex} and
the maximization can be performed using any global optimization method, e.g.
simulated annealing or Nelder-Mead.\footnote{See \cite{WainwrightJordan2008}
and \cite{Bishop2006} for more details.}

%%%%%%%%%%%%%%%%%%%%%%%%%%%%%%%%%%%%%%%%%%%%%%%%%%%
%%%%%%%%%%%%%%%%%%%%%%%%%%%%%%%%%%%%%%%%%%%%%%%%%%%%
%  ASYMPTOTICS AND GRAPH LIMITS
%%%%%%%%%%%%%%%%%%%%%%%%%%%%%%%%%%%%%%%%%%%%%%%%%%%%
\section{Theoretical results}

\subsection{Convergence of the variational mean-field approximation}
For finite $n$, the variational mean-field approximation contains
an error of approximation. In the next theorem, we provide a lower and upper bound to the error of approximation for our model.
\begin{theorem}
\label{thm:mf_bounds}
For fixed network size $n$, the approximation
error of the variational mean-field problem is bounded as
\begin{equation}
\frac{C_{3}(\beta,\gamma)}{n} \leq \psi_{n}(\alpha,\beta,\gamma) - \psi_{n}^{MF}(\bm{\mu}(\alpha,\beta,\gamma)) \leq C_{1}(\alpha,\beta,\gamma)\left(\frac{\log n}{n}\right)^{1/5}+\frac{C_{2}(\alpha,\beta,\gamma)}{n^{1/2}},
\end{equation}
where $C_{1}(\alpha,\beta,\gamma)$, $C_{2}(\alpha,\beta,\gamma)$ are constants depending on $\alpha$, $\beta$ and $\gamma$
and $C_{3}(\beta,\gamma)$ is a constant depending only on $\beta,\gamma$:
\begin{align*}
&C_{1}(\alpha,\beta,\gamma):=c_{1}\cdot\left(\max_{i,j}|\alpha_{ij}|+|\beta|^{4}+|\gamma|^{4}+1\right),
\\
&C_{2}(\alpha,\beta,\gamma):=c_{2}\cdot\left(\max_{i,j}|\alpha_{ij}|+|\beta|+|\gamma|+1\right)^{1/2}\cdot(1+|\beta|^{2}+|\gamma|^{2})^{1/2},
\\
&C_{3}(\beta,\gamma):=|\beta|+4|\gamma|,
\end{align*}
where $c_{1},c_{2}>0$ are some universal constants.
\end{theorem}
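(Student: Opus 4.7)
The plan is to prove the two bounds separately: the upper bound by extending the nonlinear large deviations framework of Chatterjee--Dembo to the covariate setting, and the lower bound by exploiting the exact Gibbs variational principle together with a diagonal-correction computation. The hard direction is the upper bound.

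For the upper bound, I would follow five steps. First, partition $[0,1]$ into $k$ equal sub-intervals and approximate $\alpha(x,y)$ by its block-averaged step function $\tilde\alpha$ on the resulting $k\times k$ grid; under Assumption \ref{AssumpII} this introduces an error of order $\|\alpha\|_\infty/k$ in the linear part of $T_n$, which is transferred to $\psi_n$ by monotonicity and standard comparison. Second, verify that the restriction of $T_n$ to $\{0,1\}^{\binom{n}{2}}$ has the regularity required by the Chatterjee--Dembo nonlinear LDP: its $\ell^\infty$-gradient is bounded by a multiple of $\|\alpha\|_\infty+|\beta|$, and, crucially, its gradient vector field has low Gaussian/covering complexity, because the Hessian is determined by the bipartite two-star structure $\mu_{ij}\mu_{jk}$, which is low-rank after discretization on $k$ blocks. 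Third, invoke the Chatterjee--Dembo inequality in the form
\begin{equation*}
\log \sum_{g\in\mathcal{G}} e^{n^{2}T_{n}(g)} \;\leq\; \sup_{q\ \text{product}}\Bigl\{\mathbb{E}_{q}[n^{2}T_{n}] + \mathcal{H}(q)\Bigr\} + E(k,n),
\end{equation*}
where $E(k,n)$ is an explicit function of $k$ and the gradient complexity, polynomial in $|\beta|$. Fourth, balance the discretization error ($\asymp\|\alpha\|_\infty/k$) against $E(k,n)/n^{2}$ by optimizing $k$: the Chatterjee--Dembo error scales like $n^{-2/5}(\log n)^{1/5}\cdot k^{1/5}$ (up to polynomial factors in $|\beta|$), and choosing $k$ to equate the two contributions produces the advertised $n^{-1/5}(\log n)^{1/5}$ rate with the constant $C_{1}(\alpha,\beta)=c_{1}(\|\alpha\|_\infty+|\beta|^{4}+1)$. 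Fifth, the residual $C_{2}(\alpha,\beta)n^{-1/2}$ term comes from the Bernoulli concentration estimate used inside the complexity argument to approximate expectations of $T_n$ under product measures by their deterministic counterparts; a Hoeffding/Bernstein bound produces the square-root-in-$n$ scaling and the precise dependence $(\|\alpha\|_\infty+|\beta|+1)^{1/2}(1+|\beta|^{2})^{1/2}$.

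For the lower bound, I start from the exact Gibbs variational principle $\psi_{n}(\alpha,\beta)=\sup_{q_{n}}\{\mathbb{E}_{q_{n}}[T_{n}]+n^{-2}\mathcal{H}(q_{n})\}$. Plugging in the product measure $q_{n}^{*}$ associated with the optimal mean-field matrix $\bm{\mu}^{*}$ gives $\psi_{n}\geq\mathbb{E}_{q_{n}^{*}}[T_{n}]+n^{-2}\mathcal{H}(q_{n}^{*})$; this exact expectation differs from the mean-field functional only in the two-star term because for product measures one has the identity
\begin{equation*}
\mathbb{E}_{q_{n}^{*}}\!\Bigl[\,\sum_{i,j,k} g_{ij}g_{jk}\Bigr] \;=\; \sum_{i,j,k}\mu^{*}_{ij}\mu^{*}_{jk} + \sum_{i,j}\mu^{*}_{ij}(1-\mu^{*}_{ij}),
\end{equation*}
the extra diagonal contribution coming from $g_{ij}^{2}=g_{ij}$. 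Hence $\psi_{n}-\psi_{n}^{MF}(\bm{\mu}^{*})\geq \tfrac{\beta}{2n^{3}}\sum_{i,j}\mu^{*}_{ij}(1-\mu^{*}_{ij})$ when $\beta>0$. For $\beta<0$ the same test-measure gives a bound in the wrong direction, so I would instead exhibit a non-product $q_{n}$ (e.g., pair up antipodal edges anti-correlatedly) whose correlation structure exactly reverses the sign of the diagonal correction, recovering $|\beta|/n$ up to absolute constants. This case analysis delivers $C_{3}(\beta)=|\beta|$.

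The main obstacle is not the lower bound nor the discretization step, but Step~2--3 of the upper bound: one must show that introducing a non-constant $\alpha(x,y)$ does not inflate the gradient/covering complexity relevant to the Chatterjee--Dembo bound beyond a multiplicative factor that is polynomial in $\|\alpha\|_\infty$. Concretely, the step-function approximation $\tilde\alpha$ effectively partitions the edges of $K_{n}$ into $O(k^{2})$ blocks, and one has to verify that the total complexity grows only polynomially in $k$ so that the balancing in Step~4 yields a non-trivial rate. This is where the extension of the graphon results beyond constant edge weights is genuinely needed, and where the bookkeeping of constants $C_{1}, C_{2}$ in terms of $\|\alpha\|_\infty$ and $|\beta|$ must be done carefully.
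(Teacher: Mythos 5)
Your overall framework for the upper bound is the right one (the paper also applies the nonlinear large deviations estimate of Chatterjee--Dembo, Theorem 1.5, to $f(x)=\sum_{i,j}\alpha_{ij}x_{ij}+\frac{\beta}{2n}\sum_{i,j,k}x_{ij}x_{jk}$ with $N=\binom{n}{2}$), but your Steps 1 and 4 introduce a detour that is both unnecessary and unjustified. Under Assumption \ref{AssumpII} the function $\alpha$ is only uniformly \emph{bounded}, not Lipschitz, so block-averaging on a $k\times k$ grid does not produce an $O(\Vert\alpha\Vert_{\infty}/k)$ error, and there is no discretization parameter to balance. The paper sidesteps this entirely: since $\partial f/\partial x_{ij}=2\alpha_{ij}+\frac{\beta}{2}\,\partial T/\partial x_{ij}$, a covering set $\tilde{\mathcal{D}}(2\epsilon/\beta)$ for the gradient of the two-star functional $T$ (Chatterjee--Dembo, Lemma 5.2) translates into a covering set for $\nabla f$ by shifting by the constants $2\alpha_{ij}$, which inflates the covering number only by the factor $\tfrac{1}{2}n(n+1)$, i.e.\ an additive $\log\bigl(\tfrac{1}{2}n(n+1)\bigr)$ in $\mathcal{E}_{1}$. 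The rate $n^{-1/5}(\log n)^{1/5}$ and the constant $C_{1}\asymp\Vert\alpha\Vert_{\infty}+|\beta|^{4}+1$ then come from optimizing $\mathcal{E}_{1}\lesssim(\Vert\alpha\Vert_{\infty}+|\beta|)n^{2}\epsilon+\beta^{4}n\epsilon^{-4}\log(\cdot)$ over $\epsilon$ (choosing $\epsilon=(\log n/n)^{1/5}$), not from balancing a grid size $k$; the $|\beta|^{4}$ enters because the resolution required of $\tilde{\mathcal{D}}$ is $2\epsilon/\beta$. Your attribution of the $C_{2}n^{-1/2}$ term to the second-order/concentration part of the Chatterjee--Dembo bound ($\mathcal{E}_{2}\lesssim n^{3/2}$) is in the right spirit.

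The more serious gap is the lower bound. The paper's lower bound is simply the easy direction of the same theorem, $F\geq\sup\{f-I\}-\tfrac{1}{2}\sum_{i<j}c_{(ij)(ij)}$ with $\sum_{i<j}c_{(ij)(ij)}\leq 2|\beta|n$, which after dividing by $n^{2}$ yields $\psi_{n}-\psi_{n}^{MF}\geq-|\beta|n^{-1}$; the positive sign in the displayed theorem is evidently a typo, since the paper's own argument (and the fact that nothing forces the gap to be bounded \emph{below} by a positive multiple of $n^{-1}$) only supports the $-C_{3}(\beta)n^{-1}$ version. Your product-measure computation, including the diagonal identity $\mathbb{E}_{q^{*}}[\sum_{i,j,k}g_{ij}g_{jk}]=\sum_{i,j,k}\mu^{*}_{ij}\mu^{*}_{jk}+\sum_{i,j}\mu^{*}_{ij}(1-\mu^{*}_{ij})$, is correct and in fact gives an elementary route to the correctly signed bound: for every $\bm{\mu}$ one has $\mathcal{L}(q_{\bm{\mu}})\geq\text{(naive functional)}(\bm{\mu})-\tfrac{|\beta|}{8n}$, hence $\psi_{n}-\psi_{n}^{MF}\geq-\tfrac{|\beta|}{8n}$ for either sign of $\beta$. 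But your attempt to prove the \emph{positive} lower bound as literally stated fails: for $\beta>0$ you only get nonnegativity unless you additionally show $\sum_{i,j}\mu^{*}_{ij}(1-\mu^{*}_{ij})\gtrsim n^{2}$ (whose constant would then depend on $\alpha$, not only on $\beta$); and for $\beta<0$ the proposed ``anti-correlated pairing'' cannot work, because the diagonal correction comes from $g_{ij}g_{ji}=g_{ij}^{2}=g_{ij}$, which no correlation structure among \emph{distinct} edges can alter, and for a non-product $q$ the entropy no longer factorizes, so the construction does not yield a usable bound on $\mathcal{L}(q)$.
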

%\begin{proof} See Appendix. \end{proof}
The constants in Theorem \ref{thm:mf_bounds} are functions of the parameters $\alpha$, $\beta$ and $\gamma$. The upper bound depends on the maximum payoff from direct links and the intensity of payoff from indirect connections. The lower bound only depends on the strength of indirect connections payoffs (popularity and common friends, that is $\beta$ and $\gamma$). One consequence is that our result holds when the network is dense, but also when it is moderately sparse, in the sense that
$|\alpha_{ij}|$, $|\beta|$ and $|\gamma|$ can have moderate growth in $n$ instead
of being bounded, and the difference of $\psi_{n}$ and $\psi_{n}^{MF}$ goes
to zero if $C_{1}(\alpha,\beta,\gamma)$ grows slower than $n^{1/5}/(\log n)^{1/5}$
and $C_{2}(\alpha,\beta,\gamma)$ grows slower than $n^{1/2}$ as $n\rightarrow\infty$.
For example, if $\max_{i,j}|\alpha_{ij}|=O(n^{\delta_{1}})$, $|\beta|=O(n^{\delta_{2}})$, $|\gamma|=O(n^{\delta_{3}})$
where $\delta_{1}<\frac{1}{5}$ and $\delta_{2},\delta_{3}<\frac{1}{20}$, then $\psi_{n}-\psi_{n}^{MF}$
goes to zero as $n\rightarrow\infty$. On the other hand, if the graph is too sparse, e.g. 
$|\beta|=\Omega(n)$, $|\gamma|=\Omega(n)$, then $\psi_{n}$ cannot be approximated by $\psi_{n}^{MF}$.
%As a by-product of Theorem \ref{thm:mf_bounds}, 
%we will show that $\pi_{n}$ and $q_{n}$ differ 
%from each other with order $n$ in terms of KL divergence,
%which indicates that we may not necessarily be able
%to assume independence among edges in the estimation of
%a large network. 
%
%\begin{proposition}\label{prop:diff}
%Let $q_n^{\ast}(g,\alpha,\beta,\gamma)$ be the probability
%distribution defined in \eqref{qn:eqn}, evaluated at the solution  $\bm{\mu}^{\ast}$ %of variational problem \eqref{eqn:mean-fieldproblem}.
%Then the likelihood of the ERGM model $\pi_n(g,\alpha,\beta,\gamma)$ and 
%the likelihood with independent links $q_n^{\ast}(g,\alpha,\beta,\gamma)$ differ
%in Kullback-Leibler divergence by a term of order $n$.
%\begin{equation}
%KL(q^{\ast}_{n}|\pi_{n})
%\geq
%C_{3}(\beta,\gamma)n.
%\end{equation}
%\end{proposition}
%\begin{proof}
%See Appendix.
%\end{proof}

%The result in Proposition \ref{prop:diff} has practical implications. 
%First, it implies that the ERGM model presented in this work does not 
%necessarily converge to an inhomogeneous Erdos-Renyi model in the large
%$n$ limit, as the likelihoods $\pi_n$ and $q_n^{\ast}$ diverge by a factor of order %$n$.
%Second, while we cannot directly use $q^{\ast}_n$ to approximate
%the likelihood of the ERGM, 
Our main Theorem \ref{thm:mf_bounds} implies
that \emph{we can approximate the log-likelihood of the ERGM} using the 
mean-field approximated constant. 

\begin{proposition}\label{prop:loglik_approx}
Let $\ell_{n}(g_n,\alpha,\beta,\gamma)$ be the log-likelihood of the ERGM
\begin{equation*}
\ell_{n}(g_n,\alpha,\beta,\gamma):=n^{-2}\log\left(\pi_{n}(g_n,\alpha,\beta,\gamma)\right)
=T_{n}(g_n,\alpha,\beta,\gamma)-\psi_{n}(\alpha,\beta,\gamma),
\end{equation*}
and $\ell_{n}^{MF}(g_n,\alpha,\beta,\gamma)$ be the ``mean-field log-likelihood'' obtained by approximating $\psi_n$ with $\psi_n^{MF}$:
\begin{equation*}
\ell_{n}^{MF}(g_n,\alpha,\beta,\gamma):= T_{n}(g_n,\alpha,\beta,\gamma)-\psi_{n}^{MF}(\alpha,\beta,\gamma).
\end{equation*}
Then for any compact  parameter space $\Theta$,
\begin{equation}
0\leq\sup_{\alpha,\beta,\gamma\in \Theta}
\left[\ell_{n}^{MF}-\ell_{n}\right] 
\leq \sup_{\alpha,\beta,\gamma\in \Theta}
C_{1}(\alpha,\beta,\gamma)n^{-1/5}(\log n)^{1/5}+\sup_{\alpha,\beta,\gamma\in \Theta}C_{2}(\alpha,\beta,\gamma)n^{-1/2}.
\end{equation}
\end{proposition}
%\begin{proof}See Appendix.\end{proof}

Proposition \ref{prop:loglik_approx} shows that as the network size grows large, the mean-field approximation of the log-likelihood $\ell_n^{MF}$ is arbitrarily close to the ERGM log-likelihood $\ell_n$. 
This approximation is similar in spirit to the MCMC-MLE method, where the log-normalizing constant is approximated via MCMC to obtain an approximated log-likelihood \citep{GeyerThompson1992, Snijders2002, DePaula2017, MollerWaagepetersen2004}. The main difference is that our approximation is \emph{deterministic} and does not require any simulation.

Note that $\ell_{n}^{MF}=T_{n}-\psi_{n}^{MF}$ 
and $\ell_{n}=T_{n}-\psi_{n}$. 
If $\ell_{n}$ converges to $\ell_{\infty}$ uniformly on 
a compact parameter space $\Theta$, then so does $\ell_{n}^{MF}$.
If $\ell_{n},\ell_{n}^{MF}$ and $\ell_{\infty}$ are continuous
and strictly concave, $\hat{\theta}_{n}$, $\hat{\theta}_{n}^{MF}$, 
the unique maximizers of $\ell_{n}$ and $\ell_{n}^{MF}$ will converge
to the unique maximizer of $\ell_{\infty}$ and hence
$\hat{\theta}_{n}-\hat{\theta}_{n}^{MF}$ will go to zero as $n\rightarrow\infty$.
In the Online Appendix we provide further results on the behavior of the mean-field approximation as $n\rightarrow\infty$, where we discuss the convergence of the log-constant.\footnote{The strict concavity of the likelihood is closely related to the identification of parameters in ERGM models, for which there is a lack of general results (see \cite{Mele2010a}, \cite{DiaconisChatterjee2011}, \cite{AristoffZhu2014} for examples in special cases). }

The result in Proposition \ref{prop:loglik_approx} can be used to bound the distance between the mean-field estimate and the maximum likelihood estimate, for any network size rather than for large $n$. However, such bounds require additional and stronger assumptions on the shape of the likelihood. Indeed, in Appendix~\ref{sec:bound}, we show that a sufficient conditions for computing the bound is a strongly concave likelihood. Under such assumption, we can use the bound in Proposition \ref{prop:loglik_approx}
for the log-likelihood to provide a bound on the distance between MLE and mean-field estimator for any network size $n$. However, these bounds may not be sharp, and therefore we consider them very conservative. In the next section we show via Monte Carlo simulation that in many cases our estimator performs better than the bounds would imply.

%%%%%%%%%%%%%%%%%%%%%%%%%%%%%%%%%%%%%%%%%%%%%%%%
\section{Estimation Experiments}
To understand the performance of the variational approximation in smaller networks, we perform some Monte Carlo experiments. We compare the mean-field approximation with the standard simulation-based MCMC-MLE \cite{GeyerThompson1992, Snijders2002} and the MPLE \citep{Besag1974}. Our method converges in $n^2$ steps, while the MCMC-MLE may converge in $e^{n^2}$ steps. The MPLE usually converges faster.
\subsection{Approximation algorithm for the normalizing constant}
 We implemented our variational approximation for few models in the R package \texttt{mfergm}, available in Github.\footnote{See \url{https://github.com/meleangelo/mfergm}, with instructions for installation and few examples.} We follow the statistical machine learning literature and use an iterative algorithm
that is guaranteed to converge to a local maximum of the mean-field problem \citep{WainwrightJordan2008,Bishop2006}. The algorithm is derived from first-order conditions of the variational mean-field problem. 

Let $\bm{\mu}^{\ast}$  be the matrix that solves the variational problem  \eqref{eqn:mean-fieldproblem}. If we take the derivative with respect to $\mu_{ij}$ and equate to zero, we get
\begin{equation}
\mu_{ij}^{\ast}=\left\lbrace 1+\exp\left[-2\alpha_{ij}-\beta n^{-1}\sum_{k=1}^{n}\left(\mu_{jk}^{\ast}+\mu_{ki}^{\ast}\right)- 4\gamma n^{-1}\sum_{k=1}^{n}\mu_{jk}^{\ast}\mu_{ki}^{\ast} \right]\right\rbrace^{-1}
\label{eq:mean-field_update}
\end{equation}
The logit equation in \eqref{eq:mean-field_update} characterizes a system of equations, whose fixed point is a solution of the mean-field problem. We can therefore start from a matrix $\bm{\mu}$ and iterate the updates in \eqref{eq:mean-field_update} until we reach a fixed point, as described in the following algorithm. 
\begin{algorithm}\label{algo:approx_logconstant} \textbf{Approximation of log-normalizing constant}.
Fix parameters $\alpha,\beta, \gamma$ and a relatively small tolerance value $\epsilon_{tol}$. Initialize the $n\times n$ matrix $\bm{\mu}^{(0)}$ as $ \mu_{ij}^{(0)} \overset{iid}{\sim} U[0,1]$, 
 for all $i,j$. Fix the maximum number of iterations as $T$. 
Then for each $t=0,..., T$:\\
Step 1. Update the entries of matrix $\bm{\mu}^{(t)}$  for all $i,j=1,...,n$
%    \begin{equation}
%\mu_{ij}^{(t+1)}  %=\frac{\exp\left[2\alpha_{ij}+\frac{\beta}{n}\sum_{k=1}^{n}\left(\m%u_{jk}^{(t)}+\mu_{ki}^{(t)}\right) + %\frac{\gamma}{n}\sum_{k=1}^{n}\mu_{jk}^{(t)}\mu_{ki}^{(t)}\right]}
%{1+\exp\left[2\alpha_{ij}+\frac{\beta}{n}\sum_{k=1}^{n}\left(\mu_{j%k}^{(t)}+\mu_{ki}^{(t)}\right)+ %\frac{\gamma}{n}\sum_{k=1}^{n}\mu_{jk}^{(t)}\mu_{ki}^{(t)} %\right]},
%\end{equation}
   \begin{equation}
\mu_{ij}^{(t+1)}  =\left\lbrace 1+\exp\left[-2\alpha_{ij}-\beta n^{-1}\sum_{k=1}^{n}\left(\mu_{jk}^{(t)}+\mu_{ki}^{(t)}\right)- 4\gamma n^{-1}\sum_{k=1}^{n}\mu_{jk}^{(t)}\mu_{ki}^{(t)} \right]\right\rbrace^{-1}.
\end{equation}

Step 2.  Compute the value of the variational mean-field log-constant $\psi_{n}^{MF(t)}$ as
    \begin{align*}
     \psi_{n}^{MF(t)} &= \frac{\sum_{i}\sum_{j}\alpha_{ij}\mu_{ij}^{(t)}}{n^{2}}
 +\beta\frac{\sum_{i}\sum_{j}\sum_{k}\mu_{ij}^{(t)}\mu_{jk}^{(t)}}{2n^{3}}
 +\gamma\frac{2\sum_{i}\sum_{j}\sum_{k}\mu_{ij}^{(t)}\mu_{jk}^{(t)}\mu_{ki}^{(t)}}{3n^{3}}
 \nonumber
 \\
 & \qquad\qquad- \frac{1}{2n^{2}}\sum_{i=1}^{n}\sum_{j=1}^{n}\left[\mu_{ij}^{(t)}\log\mu_{ij}^{(t)}+(1-\mu_{ij}^{(t)})\log(1-\mu_{ij}^{(t)})\right].
    \end{align*}
Step 3. Stop at $t^{\ast}\leq T$ if: $\psi_{n}^{MF(t^{\ast})} - \psi_{n}^{MF(t^{\ast}-1)}\leq \epsilon_{tol}$. 
  Otherwise go back to Step 1.  
\end{algorithm}
The algorithm is initialized at a random uniform matrix $\bm{\mu^{(0)}}$
and iteratively applies the update (\ref{eq:mean-field_update})
to each entry of the matrix, until the increase in the objective function is less than a tolerance level . Since the problem is concave in each
$\mu_{ij}$, this iterative method is guaranteed to find a local
maximum of (\ref{eqn:mean-fieldproblem}).\footnote{There are other alternatives to the random uniform matrix. Indeed a simple starting value could be the set of conditional probabilities of the model at parameters $\alpha, \beta, \gamma$. We did not experiment with this alternative method. }
In our simulations we use a tolerance level of $\epsilon_{tol}=0.0001$. To improve convergence we can re-start the algorithm from different random matrices, as usually done with local optimizers.\footnote{In the Monte Carlo  exercises we have experimented with different numbers of re-starts of the iterative algorithm. However, it is not clear what would be the optimal number of re-starts. A fixed number of
restarts could be suboptimal. It seems reasonable to increase this number as the network grows larger.} This step is
easily parallelizable, thus preserving the order $n^2 $ convergence; while the standard MCMC-MLE is an intrinsically
sequential algorithm and cannot be parallelized.

\subsection{Monte Carlo design} 
All the computations in this section are performed on a PC Dell T6610 with 6 Quad-core Intel i7 (48 threads) and 64GB RAM. 
We test our approximation using $1000$ simulated networks. Each node $i$ has a binary attribute $x_i$, i.e.
$x_i \overset{iid}{\sim}Bernoulli(0.5)$.
Let $z_{ij}=1$ if $x_i=x_j $ and $z_{ij}=0$ otherwise. 
\begin{align}
&t_z(g):=\frac{1}{n^2}\sum_{i=1}^n \sum_{j=1}^n g_{ij}  z_{ij};\hspace{.1cm}t_{-z}(g):= \frac{1}{n^2}\sum_{i=1}^n \sum_{j=1}^n  g_{ij} (1-z_{ij}),\\
&t_e(g):=\frac{1}{n^2}\sum_{i=1}^n \sum_{j=1}^n g_{ij};\hspace{.1cm} t_s(g):=\frac{1}{n^3}  \sum_{i=1}^n \sum_{j=1}^n \sum_{k=1}^n g_{ij}g_{jk}  ;\hspace{.1cm} t_t(g):= \frac{1}{n^3}\sum_{i=1}^n \sum_{j=1}^n \sum_{k=1}^n g_{ij}g_{jk}g_{ki},\nonumber
\end{align}
where $t_e(g)$, $t_s(g)$ and $t_t(g)$ are the fraction of links, two-stars and triangles respectively. And $t_z(g)$ and $t_{-z}(g)$ are the fractions of links of the same type and different type, respectively.
The log-likelihood of the model $\ell_n (g;\alpha,\beta,\gamma)$ is
\begin{equation}
    \ell_n (g,x;\alpha,\beta,\gamma) = \alpha_1 t_z(g) + \alpha_2 t_{-z}(g) + (\beta/2) t_s(g) + (2\gamma/3) t_t(g) -\psi_n(\alpha_1, \alpha_2,\beta, \gamma). 
    \label{eq:loglik_simulations}
\end{equation}

%\begin{align}%\label{eq:loglik_simulations}
%    \ell_n (g,&x;\alpha,\beta,\gamma) = \alpha_1 %\left[\frac{1}{n^2}\sum_{i=1}^n \sum_{j=1}^n g_{ij}  %z_{ij}\right] + \alpha_2 \left[\frac{1}{n^2}\sum_{i=1}^n %\sum_{j=1}^n  g_{ij} (1-z_{ij})\right] \\
%    &+ \frac{\beta}{2} \left[ \frac{1}{n^3}  \sum_{i=1}^n %\sum_{j=1}^n \sum_{k=1}^n g_{ij}g_{jk} \right] + %\frac{\gamma}{6} \left[ \frac{1}{n^3}\sum_{i=1}^n %\sum_{j=1}^n \sum_{k=1}^n g_{ij}g_{jk}g_{ki}\right] %-\psi_n(\alpha_1, \alpha_2,\beta, \gamma). \nonumber
%\end{align}
%The first term counts the fraction of links in which $i$ %and $j$ have the same value of $x$, while the second term %is the fraction of links in which $i$ and $j$ have %different values of $x$. The third term is the number of %stars, while the third is the number of triangles. 
For computational convenience  we rewrite model  \eqref{eq:loglik_simulations} in a slightly different but equivalent way
%\begin{align}\label{eq:loglik_simulations_R}
%    \ell_n (g,&x;\tilde{\alpha},\beta,\gamma) = \tilde{\alpha}_1 \left[\frac{1}{n^2}\sum_{i=1}^n \sum_{j=1}^n g_{ij}  \right] + \tilde{\alpha}_2 \left[\frac{1}{n^2}\sum_{i=1}^n \sum_{j=1}^n  g_{ij} z_{ij}\right] \\
%    &+ \frac{\beta}{2} \left[ \frac{1}{n^3}  \sum_{i=1}^n \sum_{j=1}^n \sum_{k=1}^n g_{ij}g_{jk} \right] + \frac{\gamma}{6} \left[ \frac{1}{n^3}\sum_{i=1}^n \sum_{j=1}^n \sum_{k=1}^n g_{ij}g_{jk}g_{ki}\right] - \psi_n(\alpha_1, \alpha_2,\beta, \gamma), \nonumber
%\end{align}
\begin{equation}
        \ell_n (g,x;\tilde{\alpha},\beta,\gamma) = \tilde{\alpha}_1 t_e(g) + \tilde{\alpha}_2 t_z(g)
    + (\beta/2) t_s(g) + (2\gamma/3) t_t(g) - \psi_n(\alpha_1, \alpha_2,\beta, \gamma),
    \label{eq:loglik_simulations_R}
\end{equation}
where we have defined $\tilde{\alpha}_1:=\alpha_2$ and $\tilde{\alpha}_2 := \alpha_1 - \alpha_2$. We use specification \eqref{eq:loglik_simulations_R} in our simulations.\footnote{There are other small differences in how we have specified the model and how  we have setup computations using the \texttt{statnet} package in R, that can affect the comparability of the simulation results, in particular the normalizations of the sufficient statistics.  This is handled by our \texttt{mfergm} package, to guarantee comparability of the estimates obtained with MCMC-MLE, MPLE and Mean-field approximate inference.}

To generate the artificial networks, we draw i.i.d. attributes $x_i \sim Bernoulli(0.5)$, initialize a network with $n$ nodes as an Erdos-Renyi graph with probability $p=e^{\tilde{\alpha}_1}/(1+e^{\tilde{\alpha}_1})$, and  then run the Metropolis-Hastings network sampler using the \texttt{simulate.ergm} command in the R package \texttt{ergm} to sample $1000$ networks, each separated by $10,000$ iterations, and after a burn-in of $10$ million iterations.\footnote{The code is available in the Github package \texttt{mfergm}, and the function is \texttt{simulate.model$\#$}, where $\#$ stands for the model number: 2 is the model with $\gamma=0$, 3 is the model with $\beta=0$, and 4 is the model with $\beta\neq 0 $ and $\gamma \neq 0$.} The MCMC-MLE estimator is solved using the Stochastic approximation method of \cite{Snijders2002}, where each simulation has a burnin of $100,000$ iterations of the Metropolis-Hastings sampler and networks are sampled every $1000$ iterations. The other convergence parameters are kept at default of the \texttt{ergm} package. The MPLE estimate is obtained using the default parameters in \texttt{ergm}. To be sure that our results do not depend on the initialization of the parameters, we start each estimator at the true parameter value, thus decreasing the computational time required for convergence. All the code is available in Github for replication.

\begin{table}[!ht]
\caption{Monte Carlo estimates, comparison of three methods. True parameter vector is $(\tilde{\alpha}_1, \tilde{\alpha}_2, \beta,\gamma)=(-2,1,1,1)$}
\centering
\begin{small}
\begin{tabular}{lcccccccccccc}
  \hline
$n=50$ & \multicolumn{4}{c}{MCMC-MLE} & \multicolumn{4}{c}{MEAN-FIELD} & \multicolumn{4}{c}{MPLE} \\
  \hline
 & $\tilde{\alpha}_1$ & $\tilde{\alpha}_2$ & $\beta$ & $\gamma  $ & $\tilde{\alpha}_1$ & $\tilde{\alpha}_2$ & $\beta$  & $\gamma$ & $\tilde{\alpha}_1$ & $\tilde{\alpha}_2$ & $\beta$  & $\gamma $  \\
  \hline
median & -2.002 & 1.024 & 0.716 & -2.042 & -2.000 & 0.998 & 1.000 & 0.999 & -1.957 & 1.016 & 0.118 & -0.584 \\ 
  mad & 0.295 & 0.238 & 3.412 & 26.132 & 0.044 & 0.040 & 0.012 & 0.012 & 0.268 & 0.179 & 3.261 & 16.540 \\ 
 \hline\hline

$n=100$ & \multicolumn{4}{c}{MCMC-MLE} & \multicolumn{4}{c}{MEAN-FIELD} & \multicolumn{4}{c}{MPLE} \\
  \hline
median & -1.991 & 0.991 & 0.886 & 1.183 & -2.002 & 0.995 & 1.001 & 0.999 & -1.974 & 0.991 & 0.713 & 1.020 \\ 
  mad & 0.197 & 0.117 & 2.324 & 16.150 & 0.020 & 0.017 & 0.005 & 0.005 & 0.178 & 0.085 & 2.237 & 10.478 \\ 
\hline\hline

$n=200$& \multicolumn{4}{c}{MCMC-MLE} & \multicolumn{4}{c}{MEAN-FIELD} & \multicolumn{4}{c}{MPLE} \\
  \hline
median & -2.000 & 1.000 & 1.043 & 0.438 & -2.003 & 0.995 & 1.001 & 0.999 & -1.990 & 1.000 & 0.853 & 0.657 \\ 
  mad & 0.127 & 0.064 & 1.686 & 10.627 & 0.009 & 0.009 & 0.002 & 0.002 & 0.125 & 0.046 & 1.613 & 7.950 \\ 
 \hline\hline

$n=500$& \multicolumn{4}{c}{MCMC-MLE} & \multicolumn{4}{c}{MEAN-FIELD} & \multicolumn{4}{c}{MPLE} \\
  \hline
median & -2.000 & 1.001 & 1.000 & 0.706 & -2.002 & 0.994 & 1.016 & 0.992 & -1.994 & 1.001 & 0.912 & 0.762 \\ 
  mad & 0.084 & 0.033 & 1.090 & 6.962 & 0.007 & 0.008 & 0.023 & 0.011 & 0.074 & 0.023 & 0.945 & 4.691 \\ 
   \hline\hline

%$n=1000$& \multicolumn{4}{c}{MCMC-MLE} & \multicolumn{4}{c}{MEAN-FIELD} & \multicolumn{4}{c}{MPLE} \\
%  \hline
% & $\tilde{\alpha}_1$ & $\tilde{\alpha}_2$ & $\beta$ & $\gamma  $ & $\tilde{\alpha}_1$ & $\tilde{\alpha}_2$ & $\beta$  & %$\gamma$ & $\tilde{\alpha}_1$ & $\tilde{\alpha}_2$ & $\beta$  & $\gamma $  \\
%   \hline
%*  median & -1.976 & 0.986 & 1.578 & 0.483 & -1.844 & 0.916 & 5.223 & -1.000 & -1.970 & 0.994 & 1.551 & -0.960 \\ 
%  0.05 & -2.191 & 0.894 & -0.231 & -13.421 & -1.978 & 0.812 & 4.321 & -1.084 & -2.173 & 0.924 & -0.822 & -11.144 \\ 
%  0.95 & -1.805 & 1.089 & 3.920 & 14.010 & -1.783 & 1.033 & 6.900 & -0.920 & -1.736 & 1.076 & 3.857 & 11.229 \medskip\\ 
%   \hline\hline
\end{tabular}
\end{small}
\flushleft Results of 1000 Monte Carlo estimates using three methods. MCMC-MLE is the Monte Carlo Maximum Likelihood estimator of \cite{GeyerThompson1992}, as implemented in \texttt{ergm} in \texttt{R}, with a stochastic approximation algorithm \cite{Snijders2002}. MEAN-FIELD is our method. MPLE is the Maximum Pseudo-Likelihood Estimate. Each network is generated with a 10 million run of the Metropolis-Hastings sampler of the \texttt{ergm} command in R, sampling every 10000 iterations. mad is the median absolute deviation.
\label{tab:MC_-2_1_1_1_twostar_triangles}
\end{table}

\subsection{Results}
The first model has true parameter vector $(\tilde{\alpha}_1, \tilde{\alpha}_2, \beta,\gamma)=(-2,1,1,1)$ and the summaries of point estimates are reported in Table \ref{tab:MC_-2_1_1_1_twostar_triangles}. We show results for  $n=50, 100,200$ and $500$; reporting median and median absolute deviation (mad) of point estimates for each parameter. 

The median estimates of the mean-field approximation are quite stable and exhibit a small bias, as is well known in the literature \citep{WainwrightJordan2008, Bishop2006}. The median results for MCMC-MLE and MPLE are quite precise for $\tilde{\alpha}_1$ and $\tilde{\alpha}_2$, but vary a lot for $\beta$ and $\gamma$, as shown by the large median absolute deviation. Nonetheless the median point estimates of $\beta$ and $\gamma$  are slowly converging to the true parameter vector as $n$ increases.\footnote{Some of the bias in the mean-field approximation may be due to the fact that we only initialize $\bm{\mu}$ once in these simulations.} Therefore, the mean-field approximation provides estimates in line with MPLE and MCMC-MLE, with more reliability for $\beta$ and $\gamma$ in these small sample estimation exercises.

\begin{table}[!ht]
\caption{Monte Carlo estimates, comparison of three methods. True parameter vector is $(\tilde{\alpha}_1, \tilde{\alpha}_2, \beta,\gamma)=(-3,2,1,3)$}
\centering
\begin{small}
\begin{tabular}{lcccccccccccc}
  \hline
$n=50$ & \multicolumn{4}{c}{MCMC-MLE} & \multicolumn{4}{c}{MEAN-FIELD} & \multicolumn{4}{c}{MPLE} \\
  \hline
 & $\tilde{\alpha}_1$ & $\tilde{\alpha}_2$ & $\beta$ & $\gamma  $ & $\tilde{\alpha}_1$ & $\tilde{\alpha}_2$ & $\beta$  & $\gamma$ & $\tilde{\alpha}_1$ & $\tilde{\alpha}_2$ & $\beta$  & $\gamma $  \\
  \hline
median & -3.041 & 2.064 & 0.743 & -0.512 & -3.007 & 1.993 & 1.000 & 3.000 & -3.026 & 2.083 & 0.215 & 1.764 \\ 
  mad & 0.476 & 0.424 & 3.811 & 25.109 & 0.026 & 0.026 & 0.013 & 0.014 & 0.514 & 0.401 & 3.593 & 16.538 \\ 
\hline\hline

$n=100$ & \multicolumn{4}{c}{MCMC-MLE} & \multicolumn{4}{c}{MEAN-FIELD} & \multicolumn{4}{c}{MPLE} \\
  \hline
median & -3.006 & 2.015 & 0.932 & 0.587 & -3.011 & 1.989 & 1.000 & 2.999 & -2.991 & 2.018 & 0.682 & 1.773 \\ 
  mad & 0.261 & 0.206 & 2.538 & 17.905 & 0.016 & 0.016 & 0.008 & 0.008 & 0.259 & 0.194 & 2.364 & 12.123 \\ 
  \hline\hline

$n=200$& \multicolumn{4}{c}{MCMC-MLE} & \multicolumn{4}{c}{MEAN-FIELD} & \multicolumn{4}{c}{MPLE} \\
  \hline
median & -3.012 & 2.007 & 1.069 & 2.807 & -3.011 & 1.988 & 1.000 & 2.999 & -3.005 & 2.011 & 0.932 & 2.988 \\ 
  mad & 0.158 & 0.117 & 1.822 & 11.360 & 0.008 & 0.008 & 0.004 & 0.004 & 0.156 & 0.109 & 1.714 & 8.144 \\ 
   \hline\hline

$n=500$& \multicolumn{4}{c}{MCMC-MLE} & \multicolumn{4}{c}{MEAN-FIELD} & \multicolumn{4}{c}{MPLE} \\
  \hline
median & -2.998 & 2.000 & 0.951 & 3.047 & -3.011 & 1.988 & 1.002 & 2.999 & -2.998 & 2.001 & 0.921 & 3.117 \\ 
  mad & 0.096 & 0.061 & 1.276 & 7.191 & 0.003 & 0.003 & 0.002 & 0.002 & 0.083 & 0.049 & 1.077 & 5.378 \\ 
 \hline\hline

%$n=1000$& \multicolumn{4}{c}{MCMC-MLE} & \multicolumn{4}{c}{MEAN-FIELD} & \multicolumn{4}{c}{MPLE} \\
%  \hline
% & $\tilde{\alpha}_1$ & $\tilde{\alpha}_2$ & $\beta$ & $\gamma  $ & $\tilde{\alpha}_1$ & $\tilde{\alpha}_2$ & $\beta$  & %$\gamma$ & $\tilde{\alpha}_1$ & $\tilde{\alpha}_2$ & $\beta$  & $\gamma $  \\
%   \hline
%*  median & -1.976 & 0.986 & 1.578 & 0.483 & -1.844 & 0.916 & 5.223 & -1.000 & -1.970 & 0.994 & 1.551 & -0.960 \\ 
%  0.05 & -2.191 & 0.894 & -0.231 & -13.421 & -1.978 & 0.812 & 4.321 & -1.084 & -2.173 & 0.924 & -0.822 & -11.144 \\ 
%  0.95 & -1.805 & 1.089 & 3.920 & 14.010 & -1.783 & 1.033 & 6.900 & -0.920 & -1.736 & 1.076 & 3.857 & 11.229 \medskip\\ 
%   \hline\hline
\end{tabular}
\end{small}
\flushleft Notes: see notes for Table \ref{tab:MC_-2_1_1_1_twostar_triangles}.
\label{tab:MC_-3_2_1_3_twostar_triangles}
\end{table}

\begin{table}[!ht]
\caption{Monte Carlo estimates, comparison of three methods. True parameter vector is $(\tilde{\alpha}_1, \tilde{\alpha}_2, \beta,\gamma)=(-3,1,2,1)$}
\centering
\begin{small}
\begin{tabular}{lcccccccccccc}
  \hline
$n=500$ & \multicolumn{4}{c}{MCMC-MLE} & \multicolumn{4}{c}{MEAN-FIELD} & \multicolumn{4}{c}{MPLE} \\
  \hline
 & $\tilde{\alpha}_1$ & $\tilde{\alpha}_2$ & $\beta$ & $\gamma  $ & $\tilde{\alpha}_1$ & $\tilde{\alpha}_2$ & $\beta$  & $\gamma$ & $\tilde{\alpha}_1$ & $\tilde{\alpha}_2$ & $\beta$  & $\gamma $  \\
  \hline
median & -3.001 & 0.998 & 2.028 & -19.034 & -3.000 & 1.000 & 2.000 & 1.000 & -2.996 & 1.000 & 1.488 & -7.923 \\ 
  mad & 0.086 & 0.065 & 7.205 & 165.600 & 0.011 & 0.011 & 0.0001 & 0.0001 & 0.078 & 0.044 & 6.345 & 84.681 \\ 

  \hline
$n=1000$ & \multicolumn{4}{c}{MCMC-MLE} & \multicolumn{4}{c}{MEAN-FIELD} & \multicolumn{4}{c}{MPLE} \\
  \hline
 & $\tilde{\alpha}_1$ & $\tilde{\alpha}_2$ & $\beta$ & $\gamma  $ & $\tilde{\alpha}_1$ & $\tilde{\alpha}_2$ & $\beta$  & $\gamma$ & $\tilde{\alpha}_1$ & $\tilde{\alpha}_2$ & $\beta$  & $\gamma $  \\
  \hline
median & -2.999 & 1.004 & 1.809 & -0.716 & -3.000 & 1.000 & 2.000 & 1.000 & -2.999 & 1.002 & 1.757 & 0.540 \\ 
  mad & 0.057 & 0.037 & 4.891 & 125.293 & 0.005 & 0.005 & 0.0001 & 0.0001 & 0.049 & 0.022 & 4.113 & 61.328 \\ 
  \hline\hline
\end{tabular}
\end{small}
\flushleft Notes: see notes for Table \ref{tab:MC_-2_1_1_1_twostar_triangles}. The case with $n=1000$ contains only $500$ monte carlo replications.
\label{tab:MC_-3_1_2_1_twostar_triangles}
\end{table}

The second set of results is for a model with parameters $(\tilde{\alpha}_1, \tilde{\alpha}_2, \beta,\gamma)=(-3,2,1,3)$, see Table \ref{tab:MC_-3_2_1_3_twostar_triangles}. The pattern is similar to Table \ref{tab:MC_-2_1_1_1_twostar_triangles}. Indeed the mean-field estimator seems to work relatively well in most cases, especially for the estimates of $\beta$ and $\gamma$.
For parameters $\tilde{\alpha}_1,\tilde{\alpha}_2$ our mean-field estimator (median) bias persists as $n$ increases. 
Finally, we also report a simulation with a larger network with $n=500,1000$ in Table \ref{tab:MC_-3_1_2_1_twostar_triangles}. The results are the same as the other tables and the mean-field approximation is robustly close to the true parameter values in most simulations. 

These Monte Carlo experiments suggests that our approximation method performs well in practice. We conclude that in most cases the mean-field approximation algorithm works better than our conservative theoretical results suggest.\footnote{While these results are encouraging, in Appendix we report some example of non-convergence of the mean-field algorithm, mostly due to our iterative algorithm getting trapped in a local maximum in some simulations.}

% \subsection{Computational speed}
% The computational speed of the three estimators is similar for small networks. For $n=100$, the mean-field approximation takes about $3.5s$ to estimate the model, while an MCMC-MLE with a burnin of $100,000$ and sampling every $1000$ iterations takes approximately $5.5s$ and the MPLE takes about $1.7s$. For $n=50$ the estimates take $1.6s$ for mean-field, $4s$ for MCMC-MLE and $1.2s$ for MPLE. 

%%%%%%%%%%%%%%%%%%%%%%%%%%%%%%%%%%%%%%%%%%%%%%%%%%%%%%%%%%%
\section{Conclusions and Future work}
We have shown that for a large class of exponential random graph models (ERGM), we can
approximate the normalizing constant of the likelihood using a mean-field variational approximation algorithm \citep{WainwrightJordan2008, Bishop2006, DiaconisChatterjee2011, Mele2010a}. Our theoretical results use nonlinear large deviations methods \citep{ChatterjeeDembo2014} to bound the error of approximation, showing that it converges to zero as the network grows. 

Our estimation method consists of replacing the log-normalizing constant in the log-likelihood of the ERGM with the value approximated by the mean-field algorithm; we then find the parameters that maximize such approximate log-likelihood. Since our approximated constant converges to the true constant in large networks, the approximate log-likelihood converges to the correct log-likelihood in sup-norm, as the network becomes large. If the likelihoods are well-behaved and not too flat around the maximizers, we can also show that our estimate converges to MLE.

Using an iterative procedure to find the approximate mean-field constant, we compare our method to MCMC-MLE and MPLE \citep{Snijders2002, Boucher2013, Besag1974, DePaula2017} in a simple Monte Carlo study for small networks. The mean-field approximation exhibits a small bias, but the median estimates are similar to MCMC-MLE and MPLE. Theoretically, our method converges in a number of steps proportional to the number of potential links of a network, while MCMC-MLE could be exponentially slow.

While these results are encouraging, there are several open problems and possible research directions. First, it is not clear that the mean-field estimates are consistent. Our small Monte Carlo seem to indicate that there is a persistent bias term, but there is no general proof in this setting along the lines of \cite{BickelEtAl2013} for stochastic block models. Second, it is not clear that the ERGM model is identified for all parameter values. Indeed some results in this literature suggest otherwise \citep{DiaconisChatterjee2011, Mele2010a, BoucherMourifie2017}. A promising research avenue for the future is the use of the large $n$ mean-field approximation to understand identification, similarly to what has been done with graph limits in \cite{DiaconisChatterjee2011}. Third, while the mean-field approximation is simple and we are able to compute the approximation errors, our lower and upper bounds may not be sharp. This raises the question of whether there is another factorization (like in structured mean-field) that leads to better approximations and faster convergence \citep{XingJordanRussell2002}. We hope that our work will stimulate additional research and more applications of this class of approximations.

%%%%%%%%%%%%%%%%%%%%%%%%%%%%%%%%%%%%%%%%%%%%%%%%%%%%%%%%%%

%%%%%%%%%%%%%%%%%%%%%%%%%%%%%%%%%%%%%%%%%%%%%%%%%%%%%%%%%%
%%%%%%%%%%%%%%%%%%%%%%%%%%%%%%%%%%%%%%%%%%%%%%%%%%%%%%%%%%
%%%%%%%%%%%%%%%%%%%%%%%%%%%%%%%%%%%%%%%%%%%%%%%%%%%%%%%%%%
%%%%%%%%%%%%%%%%%%%%%%%%%%%%%%%%%%%%%%%%%%%%%%%%%%%%%%%%%%
%%%%%%%%%%%%%%%%%%%%%%%%%%%%%%%%%%%%%%%%%%%%%%%%%%%%%%%%%%
%%%%%%%%%%%%%%%%%%%%%%%%%%%%%%%%%%%%%%%%%%%%%%%%%%%%%%%%%%
%%%%%%%%%%%%%%%%%%%%%%%%%%%%%%%%%%%%%%%%%%%%%%%%%%%%%%%%%%
%%%%%%%%%%%%%%%%%%%%%%%%%%%%%%%%%%%%%%%%%%%%%%%%%%%%%%%%%%
%%%%%%%%%%%%%%%%%%%%%%%%%%%%%%%%%%%%%%%%%%%%%%%%%%%%%%%%%%

\newpage

\bibliographystyle{jmr}
\bibliography{thesisbib}

\newpage

% Appendix
\appendix
\section*{APPENDIX}
\setcounter{section}{1}

\subsection{Proof of Theorem \ref{thm:mf_bounds}}
In this proof we will try to follow closely the notation in \citet{ChatterjeeDembo2014}.
Suppose that $f:[0,1]^{N}\rightarrow\mathbb{R}$ is twice continuously
differentiable in $(0,1)^{N}$, so that $f$ and all its first and
second order derivatives extend continuously to the boundary. Let
$\Vert f\Vert$ denote the supremum norm of $f:[0,1]^{N}\rightarrow\mathbb{R}$.
For each $i$ and $j$, denote
\begin{equation}
f_{i}:=\frac{\partial f}{\partial x_{i}},\qquad f_{ij}:=\frac{\partial^{2}f}{\partial x_{i}\partial x_{j}},
\end{equation}
and let
\begin{equation}
a:=\Vert f\Vert,\qquad b_{i}:=\Vert f_{i}\Vert,\qquad c_{ij}:=\Vert f_{ij}\Vert.
\end{equation}
Given $\epsilon>0$, $\mathcal{D}(\epsilon)$ is the finite subset
of $\mathbb{R}^{N}$ so that for any $x\in\{0,1\}^{N}$, there exists
$d=(d_{1},\ldots,d_{N})\in\mathcal{D}(\epsilon)$ such that
\begin{equation}
\sum_{i=1}^{N}(f_{i}(x)-d_{i})^{2}\leq N\epsilon^{2}.\label{fcriterion}
\end{equation}

Let us define
\begin{equation}
F:=\log\sum_{x\in\{0,1\}^{N}}e^{f(x)},
\end{equation}
and for any $x=(x_{1},\ldots,x_{N})\in[0,1]^{N}$,
\begin{equation}
I(x):=\sum_{i=1}^{N}[x_{i}\log x_{i}+(1-x_{i})\log(1-x_{i})].
\end{equation}

In the proof we rely on Theorem 1.5 in \citet{ChatterjeeDembo2014} that we reproduce in Theorem \ref{thm:theoremChatterjeeDembo2014} to help the reader:

\begin{theorem}[\citet{ChatterjeeDembo2014}] For any $\epsilon>0$,
\begin{equation}
\sup_{x\in[0,1]^{N}}\{f(x)-I(x)\}-\frac{1}{2}\sum_{i=1}^{N}c_{ii}\leq F\leq\sup_{x\in[0,1]^{N}}\{f(x)-I(x)\}+\mathcal{E}_{1}+\mathcal{E}_{2},
\end{equation}
where
\begin{equation}\label{eqn:E1}
\mathcal{E}_{1}:=\frac{1}{4}\left(N\sum_{i=1}^{N}b_{i}^{2}\right)^{1/2}\epsilon+3N\epsilon+\log|\mathcal{D}(\epsilon)|,
\end{equation}
and
\begin{eqnarray}\label{eqn:E2}
\mathcal{E}_{2} & :=4\left(\sum_{i=1}^{N}(ac_{ii}+b_{i}^{2})+\frac{1}{4}\sum_{i,j=1}^{N}(ac_{ij}^{2}+b_{i}b_{j}c_{ij}+4b_{i}c_{ij})\right)^{1/2}\\
 & \qquad+\frac{1}{4}\left(\sum_{i=1}^{N}b_{i}^{2}\right)^{1/2}\left(\sum_{i=1}^{N}c_{ii}^{2}\right)^{1/2}+3\sum_{i=1}^{N}c_{ii}+\log2.\nonumber
\end{eqnarray}
\label{thm:theoremChatterjeeDembo2014}
\end{theorem}

We will use the Theorem \ref{thm:theoremChatterjeeDembo2014} to derive the lower and upper bound of the mean-field approximation problem. 
Our results extend Theorem 1.7. in \cite{ChatterjeeDembo2014}
from the ERGM with two-stars and triangles
to the model that allows nodal covariates.
Notice that in our
case the $N$ of the theorem is the number of links, i.e. $N = \binom{n}{2}$.
Let
\begin{equation}
Z_{n}:=\sum_{x_{ij}\in\{0,1\},x_{ij}=x_{ji},1\leq i<j\leq n}e^{\sum_{1\leq i,j\leq n}\alpha_{ij}x_{ij}+\frac{\beta}{2n}\sum_{1\leq i,j,k\leq n}x_{ij}x_{jk}
+\frac{2\gamma}{3n}\sum_{1\leq i,j,k\leq n}x_{ij}x_{jk}x_{ki}},
\end{equation}
be the normalizing factor and also define
\begin{align}
L_{n} & :=\sup_{x_{ij}\in[0,1],x_{ij}=x_{ji},1\leq i<j\leq n}\bigg\{\frac{1}{n^{2}}\sum_{i,j}\alpha_{ij}x_{ij}+\frac{\beta}{2n^{3}}\sum_{i,j,k}x_{ij}x_{jk}
+\frac{2\gamma}{3n^{3}}\sum_{i,j,k}x_{ij}x_{jk}x_{ki}\\
 & \qquad-\frac{1}{n^{2}}\sum_{1\leq i<j\leq n}[x_{ij}\log x_{ij}+(1-x_{ij})\log(1-x_{ij})]\bigg\}.\nonumber
\end{align}
Notice that $n^{-2} Z_{n}=\psi_{n}$ and $L_{n} = \psi_{n}^{MF}$. \\
For our model, the function  $f:[0,1]^{\binom{n}{2}}\rightarrow\mathbb{R}$ is defined as
\begin{equation}
f(x)=\sum_{i=1}^{n}\sum_{j=1}^{n}\alpha_{ij}x_{ij}+\frac{\beta}{2n}\sum_{i=1}^{n}\sum_{j=1}^{n}\sum_{k=1}^{n}x_{ij}x_{jk}
+\frac{2\gamma}{3n}\sum_{i=1}^{n}\sum_{j=1}^{n}\sum_{k=1}^{n}x_{ij}x_{jk}x_{ki}.
\end{equation}
Then, we can compute that, 
\begin{align}
a & = \Vert f\Vert\leq\sum_{i=1}^{n}\sum_{j=1}^{n}|\alpha_{ij}|
+\frac{1}{2}|\beta|n^{2}+\frac{2}{3}|\gamma|n^{2}\\
 & \leq n^{2}\left[\max_{i,j}|\alpha_{i,j}|
 +\frac{1}{2}|\beta|+\frac{2}{3}|\gamma|\right].\nonumber
\end{align}

Let $k\in\mathbb{N}$, and $H$ be a finite simple graph on the vertex
set $[k]:=\{1,\ldots,k\}$. Let $E$ be the set of edges of $H$ and
$|E|$ be its cardinality. For a function $T:[0,1]^{\binom{n}{2}} \rightarrow\mathbb{R}$
\begin{equation}
T(x):=\frac{1}{n^{k-2}}\sum_{q\in[n]^{k}}\prod_{\{\ell,\ell'\}\in E}x_{q_{\ell}q_{\ell'}},
\end{equation}
\citet{ChatterjeeDembo2014} (Lemma 5.1.) showed that, for any $i<j$, $i'<j'$,
\begin{equation}
\bigg\Vert\frac{\partial T}{\partial x_{ij}}\bigg\Vert\leq2|E|,\label{PartialI}
\end{equation}
and
\begin{equation}
\bigg\Vert\frac{\partial^{2}T}{\partial x_{ij}\partial x_{i'j'}}\bigg\Vert\leq\begin{cases}
4|E|(|E|-1)n^{-1} & \text{if \ensuremath{|\{i,j,i',j'\}|=2} or \ensuremath{3}},\\
4|E|(|E|-1)n^{-2} & \text{if \ensuremath{|\{i,j,i',j'\}|=4}}.
\end{cases}\label{PartialII}
\end{equation}

Therefore, by \eqref{PartialI}, we can compute that
\begin{equation}
b_{(ij)}=\bigg\Vert\frac{\partial f}{\partial x_{ij}}\bigg\Vert\leq 2\max_{i,j}|\alpha_{ij}|+2|\beta|+8|\gamma|.
\end{equation}
By \eqref{PartialII}, we can also compute that
\begin{align}
c_{(i,j)(i'j')} & =\bigg\Vert\frac{\partial^{2}f}{\partial x_{ij}\partial x_{i'j'}}\bigg\Vert\\
 & \leq\begin{cases}
4\left(\frac{1}{2}|\beta|2(2-1)
+\frac{2}{3}|\gamma|3(3-1)\right)n^{-1} & \text{if \ensuremath{|\{i,j,i',j'\}|=2} or \ensuremath{3}},\\
4\left(\frac{1}{2}|\beta|2(2-1)
+\frac{2}{3}|\gamma|3(3-1)\right)n^{-2} & \text{if \ensuremath{|\{i,j,i',j'\}|=4}},
\end{cases}\nonumber
\\
&=\begin{cases}
4\left(|\beta|+4|\gamma|\right)n^{-1} & \text{if \ensuremath{|\{i,j,i',j'\}|=2} or \ensuremath{3}},\\
4\left(|\beta|+4|\gamma|\right)n^{-2} & \text{if \ensuremath{|\{i,j,i',j'\}|=4}}.
\end{cases}\nonumber
\end{align}

Next, we compute that
\begin{eqnarray}
\frac{\partial f}{\partial x_{ij}}=2\alpha_{ij}+\frac{\partial}{\partial x_{ij}}
\left[\frac{\beta}{2n}\sum_{i=1}^{n}\sum_{j=1}^{n}\sum_{k=1}^{n}x_{ij}x_{jk}
+\frac{2\gamma}{3n}\sum_{i=1}^{n}\sum_{j=1}^{n}\sum_{k=1}^{n}x_{ij}x_{jk}x_{ki}\right].
\end{eqnarray}

Let $T_{1}$ and $T_{2}$ be defined as
\begin{equation}
T_{1}(x):=\frac{1}{n}\sum_{i=1}^{n}\sum_{j=1}^{n}\sum_{k=1}^{n}x_{ij}x_{jk},
\qquad
T_{2}(x):=\frac{1}{n}\sum_{i=1}^{n}\sum_{j=1}^{n}\sum_{k=1}^{n}x_{ij}x_{jk}x_{ki}.
\end{equation}
Then, we have
\begin{equation}
\frac{\partial f}{\partial x_{ij}}=2\alpha_{ij}+\frac{\beta}{2}\frac{\partial T_{1}}{\partial x_{ij}}
+\frac{2\gamma}{3}\frac{\partial T_{2}}{\partial x_{ij}}.
\end{equation}
\citet{ChatterjeeDembo2014} (Lemma 5.2.) showed that for the $T_{1}$ and $T_{2}$ defined
above, there exist a set $\mathcal{D}_{1}(\epsilon)$
and $\mathcal{D}_{2}(\epsilon)$ satisfying
the criterion \eqref{fcriterion} (with $f=T_{1}$ and $f=T_{2}$) so that
\begin{align}
&|\mathcal{D}_{1}(\epsilon)|\leq\exp\left\{ \frac{\tilde{C}_{1}2^{4}3^{4}n}{\epsilon^{4}}\log\frac{\tilde{C}_{2}2^{4}3^{4}}{\epsilon^{4}}\right\} =\exp\left\{ \frac{\tilde{C}_{1}6^{4}n}{\epsilon^{4}}
\log\frac{\tilde{C}_{2}6^{4}}{\epsilon^{4}}\right\},
\\
&|\mathcal{D}_{2}(\epsilon)|\leq\exp\left\{ \frac{\tilde{C}_{1}3^{4}3^{4}n}{\epsilon^{4}}\log\frac{\tilde{C}_{2}3^{4}3^{4}}{\epsilon^{4}}\right\}
=\exp\left\{ \frac{\tilde{C}_{1}3^{8}n}{\epsilon^{4}}\log\frac{\tilde{C}_{2}3^{8}}{\epsilon^{4}}\right\}, 
\end{align}
where $\tilde{C}_{1}$ and $\tilde{C}_{2}$ are universal constants.
Let us define
\begin{equation}
\mathcal{D}(\epsilon):=\left\{2\alpha_{ij}+\frac{\beta}{2}d_{1}
+\frac{2\gamma}{3}d_{2}:d_{1}\in\mathcal{D}_{1}\left(\frac{2}{\beta}\cdot\frac{\epsilon}{\sqrt{2}}\right),
d_{2}\in\mathcal{D}_{2}\left(\frac{3}{2\gamma}\cdot\frac{\epsilon}{\sqrt{2}}\right),1\leq i\leq j\leq n\right\} .
\end{equation}
Hence, $\mathcal{D}(\epsilon)$ satisfies the criterion \eqref{fcriterion}
and
\begin{align}
|\mathcal{D}(\epsilon)|
&\leq\frac{1}{2}n(n+1)\left|\mathcal{D}_{1}\left(\sqrt{2}\epsilon/\beta\right)\right|
\cdot\left|\mathcal{D}_{2}\left(3\epsilon/2\sqrt{2}\gamma\right)\right|
\\
&\leq\frac{1}{2}n(n+1)\exp\left\{ \frac{\tilde{C}_{1}6^{4}\beta^{4}n}{4\epsilon^{4}}
\log\frac{\tilde{C}_{2}6^{4}\beta^{4}}{4\epsilon^{4}}\right\}
\exp\left\{ \frac{\tilde{C}_{1}3^{8}2^{6}\gamma^{4}n}{ 3^{4}\epsilon^{4}}\log\frac{\tilde{C}_{2}3^{8}2^{6}\gamma^{4}}{3^{4}\epsilon^{4}}\right\}.
\nonumber
\end{align}
Therefore, by recalling $\mathcal{E}_{1}$ from \eqref{eqn:E1}, we get
\begin{align}\label{E1C1}
\mathcal{E}_{1} & =\frac{1}{4}\left(\binom{n}{2}\sum_{1\leq i<j\leq n}b_{(ij)}^{2}\right)^{1/2}\epsilon+3\binom{n}{2}\epsilon+\log|\mathcal{D}(\epsilon)|\\
 & \leq\left[\frac{1}{4}\left(2\max_{i,j}|\alpha_{ij}|+2|\beta|+8|\gamma|\right)+3\right]\binom{n}{2}\epsilon
 \nonumber\\
 &\qquad\qquad
 +\log\left(\frac{1}{2}n(n+1)\right)+\frac{\tilde{C}_{1}6^{4}\beta^{4}n}{4\epsilon^{4}}\log\frac{\tilde{C}_{2}6^{4}\beta^{4}}{4\epsilon^{4}}
 +\frac{\tilde{C}_{1}3^{4}2^{6}\gamma^{4}n}{\epsilon^{4}}\log\frac{\tilde{C}_{2}3^{4}2^{6}\gamma^{4}}{\epsilon^{4}}\nonumber \\
 & \leq C_{1}(\alpha,\beta,\gamma)n^{2}\epsilon+\frac{C_{1}(\alpha,\beta,\gamma)n}{\epsilon^{4}}\log\frac{C_{1}(\alpha,\beta,\gamma)}{\epsilon^{4}}\nonumber \\
 & =C_{1}(\alpha,\beta,\gamma)n^{9/5}(\log n)^{1/5},\nonumber
\end{align}
by choosing $\epsilon=(\frac{\log n}{n})^{1/5}$, where $C_{1}(\alpha,\beta,\gamma)$
is a constant depending only on $\alpha,\beta,\gamma$:
\begin{equation}\label{eqn:C1}
C_{1}(\alpha,\beta,\gamma):=c_{1}\left(\max_{i,j}|\alpha_{ij}|+|\beta|^{4}+|\gamma|^{4}+1\right),
\end{equation}
where $c_{1}>0$ is some universal constant. 
To see why we can choose $C_{1}(\alpha,\beta,\gamma)$ as in 
\eqref{eqn:C1} so that \eqref{E1C1} holds, we first notice that
it follows from \eqref{E1C1} that 
we can choose $C_{1}(\alpha,\beta,\gamma)$ such that
$C_{1}(\alpha,\beta,\gamma)\geq\max\{\tilde{c}_{1}\max_{ij}|\alpha_{ij}|+\tilde{c}_{2}|\beta|+\tilde{c}_{3}|\gamma|+\tilde{c}_{4},\tilde{c}_{5}\beta^{4},\tilde{c}_{6}\gamma^{4}\}$, 
where $\tilde{c}_{1}$, $\tilde{c}_{2}$, $\tilde{c}_{3}$, $\tilde{c}_{4}$, $\tilde{c}_{5}$, $\tilde{c}_{6}>0$ are some universal constants.
Note that $\max\{\tilde{c}_{1}\max_{ij}|\alpha_{ij}|+\tilde{c}_{2}|\beta|+\tilde{c}_{3}|\gamma|+\tilde{c}_{4},\tilde{c}_{5}\beta^{4},\tilde{c}_{6}\gamma^{4}\}
\leq 
\tilde{c}_{1}\max_{ij}|\alpha_{ij}|+\tilde{c}_{2}|\beta|+\tilde{c}_{3}|\gamma|+\tilde{c}_{4}+\tilde{c}_{5}\beta^{4}+\tilde{c}_{6}\gamma^{4}
\leq
c_{1}\left(\max_{i,j}|\alpha_{ij}|+|\beta|^{4}+|\gamma|^{4}+1\right)$ for some universal constant $c_{1}>0$.
Thus, we can take $C_{1}(\alpha,\beta,\gamma)$
as in \eqref{eqn:C1}.

We can also compute from \eqref{eqn:E2} that
\begin{align*}
\mathcal{E}_{2} & =4\bigg(\sum_{1\leq i<j\leq n}(ac_{(ij)(ij)}+b_{(ij)}^{2})\\
 & \qquad+\frac{1}{4}\sum_{1\leq i<j\leq n,1\leq i'<j'\leq n}\left(ac_{(ij)(i'j')}^{2}+b_{(ij)}b_{(i'j')}c_{(ij)(i'j')}+4b_{(ij)}c_{(ij)(i'j')}\right)\bigg)^{1/2}\nonumber \\
 & \qquad+\frac{1}{4}\left(\sum_{1\leq i<j\leq n}b_{(ij)}^{2}\right)^{1/2}\left(\sum_{1\leq i<j\leq n}c_{(ij)(ij)}^{2}\right)^{1/2}+3\sum_{1\leq i<j\leq n}c_{(ij)(ij)}+\log2,\nonumber
 \end{align*}
 so that
 \begin{align*}
\mathcal{E}_{2} & \leq
 4\bigg\{\binom{n}{2}\left(n\left(\max_{i,j}|\alpha_{ij}|+\frac{1}{2}|\beta|+\frac{2}{3}|\gamma|\right)4(|\beta|+4|\gamma|)
 +\left(2\max_{i,j}|\alpha_{ij}|+2|\beta|+8|\gamma|\right)^{2}\right)\nonumber \\
 & \qquad+\frac{1}{4}n^{2}\left[\max_{i,j}|\alpha_{ij}|+\frac{1}{2}|\beta|+\frac{2}{3}|\gamma|\right]\nonumber
 \\
 &\qquad\cdot\left[\binom{n}{2}\binom{n-2}{2}4^{2}(|\beta|+4|\gamma|)^{2}n^{-4}+\left(\binom{n}{2}^{2}-\binom{n}{2}\binom{n-2}{2}\right)4^{2}(|\beta|+4|\gamma|)^{2}n^{-2}\right]\nonumber \\
 & \qquad+\left(2\max_{i,j}|\alpha_{ij}|+2|\beta|+8|\gamma|\right)
 \cdot\left(\max_{i,j}|\alpha_{ij}|+\frac{1}{2}|\beta|+\frac{2}{3}|\gamma|\right)\nonumber \\
 & \qquad\cdot\left[\binom{n}{2}\binom{n-2}{2}4(|\beta|+4|\gamma|)n^{-2}+\left(\binom{n}{2}^{2}-\binom{n}{2}\binom{n-2}{2}\right)4(|\beta|+4|\gamma|)n^{-1}\right]\bigg\}^{1/2}\nonumber \\
 & \qquad+\frac{1}{4}\binom{n}{2}\left(2\max_{i,j}|\alpha_{ij}|+2|\beta|+8|\gamma|\right)
 4(|\beta|+4|\gamma|)n^{-1}+3\binom{n}{2}4(|\beta|+4|\gamma|)n^{-1}+\log2\nonumber \\
 & \leq C_{2}(\alpha,\beta,\gamma)n^{3/2},\nonumber
\end{align*}
where we used the formulas for $a$, $b_{(ij)}$, and $c_{(ij)(i'j')}$ that 
we derived earlier and the combinatorics identities:
\begin{align*}
&\sum_{1\leq i<j\leq n,1\leq i'<j'\leq n,|\{i,j,i',j'\}|=4}1
=\sum_{1\leq i<j\leq n}\sum_{1\leq i'<j'\leq n,|\{i,j,i',j'\}|=4}1=\binom{n}{2}\binom{n-2}{2},
\\
&\sum_{1\leq i<j\leq n,1\leq i'<j'\leq n,|\{i,j,i',j'\}|=\text{$2$ or $3$}}1
=\binom{n}{2}^{2}-\binom{n}{2}\binom{n-2}{2},
\end{align*}
and $C_{2}(\alpha,\beta,\gamma)$ is a constant depending only on $\alpha,\beta,\gamma$ that
can be chosen as:
\begin{equation}
C_{2}(\alpha,\beta,\gamma):=c_{2}\left(\max_{i,j}|\alpha_{ij}|+|\beta|+
|\gamma|+1\right)^{1/2}(1+|\beta|^{2}+|\gamma|^{2})^{1/2},
\end{equation}
where $c_{2}>0$ is some universal constant.

Finally, to get lower bound, notice that
\begin{equation}
\frac{1}{2}\sum_{1\leq i<j\leq n}c_{(ij)(ij)}\leq\frac{1}{2}\binom{n}{2}4(|\beta|+4|\gamma|)n^{-1}\leq C_{3}(\beta,\gamma)n,
\end{equation}
where $C_{3}(\beta,\gamma)$ is a constant depending only on $\beta,\gamma$ 
and we can simply take $C_{3}(\beta,\gamma)=|\beta|+4|\gamma|$.

\subsection{Proof of Proposition \ref{prop:loglik_approx}.}

 We can approximate $\psi_{n}$ by $\psi_{n}^{MF}$
as seen in Theorem \ref{thm:mf_bounds}, and as a result,
we can approximate the log-likelihood as follows.
\begin{equation*}
\ell_{n}(g,\alpha,\beta,\gamma):=\frac{1}{n^{2}}\log(\pi_{n}(g,\alpha,\beta,\gamma))
=T_{n}(g,\alpha,\beta,\gamma)-\psi_{n}(\alpha,\beta,\gamma),
\end{equation*}
by the mean-field log-likelihood:
\begin{equation*}
\ell_{n}^{MF}(g,\alpha,\beta,\gamma):=T_{n}(g,\alpha,\beta,\gamma)-\psi_{n}^{MF}(\alpha,\beta,\gamma),
\end{equation*}
Then the difference between the mean-field likelihood and the ERGM likelihood is bounded uniformly over $g\in\mathcal{G}$, for any $\alpha,\beta,\gamma$:
\begin{equation*}
0\leq \ell_{n}^{MF}(g,\alpha,\beta,\gamma)-\ell_{n}(g,\alpha,\beta,\gamma)
\leq C_{1}(\alpha,\beta,\gamma)n^{-1/5}(\log n)^{1/5}+C_{2}(\alpha,\beta,\gamma)n^{-1/2}.
\end{equation*}
Therefore, for any compact $\Theta$, we have
\begin{align*}
0 & \leq  \sup_{\alpha,\beta,\gamma\in \Theta}\left[\ell_{n}^{MF}(g,\alpha,\beta,\gamma)-\ell_{n}(g,\alpha,\beta,\gamma)\right] \\
&\leq \sup_{\alpha,\beta,\gamma\in \Theta}
\left[C_{1}(\alpha,\beta,\gamma)n^{-1/5}(\log n)^{1/5}+C_{2}(\alpha,\beta,\gamma)n^{-1/2}\right]
\\
&\leq
\sup_{\alpha,\beta,\gamma\in \Theta}
C_{1}(\alpha,\beta,\gamma)n^{-1/5}(\log n)^{1/5}+\sup_{\alpha,\beta,\gamma\in \Theta}C_{2}(\alpha,\beta,\gamma)n^{-1/2}.
\end{align*}
This proves the result.

%%%%%%%%%%%%%%%%%%%%%%%%%%%%
\section{A Bound Between MLE and Mean-Field Estimator}\label{sec:bound}

We use the bounds on the likelihoods to also derive a bound on the distance between the MLE and our mean-field estimator, when the MLE exists and it is well-behaved. Because our bounds may not be sharp, this proves to be quite hard. We therefore, consider a \emph{local} version of this convergence. 
We know that the ERGM likelihood is concave in parameters because it is an exponential family. We also know that the mean-field log-constant is convex in parameters\footnote{$\psi_{n}^{MF}$ is convex in $(\alpha,\beta,\gamma)$ by its
definition in \eqref{eqn:mean-fieldproblem} since the expression
inside the supremum in \eqref{eqn:mean-fieldproblem} is affine
in $(\alpha,\beta,\gamma)$ and supremum over any affine function
is convex.}, 
therefore the approximate log-likelihood is also concave. However, to get a bound on the distance between estimates we need well-behaved objective functions, with enough curvature  at least close to their maximizers. If the objective functions is too flat, the distance between the estimator may be too large in terms of our upper bounds.\footnote{\cite{GeyerThompson1992} mentions similar problems arise for the MCMC-MLE. Indeed, as mentioned above, the MLE may not exist. For example, if the number of triangles is zero in the data, it will be impossible to estimate $\gamma$ and the MCMC-MLE may give an approximation with solution that tends to infinity. } Therefore we assume that the likelihood and its mean-field approximation have enough curvature.

\begin{proposition}\label{prop:bound}
Assume $(\alpha,\beta,\gamma)$ lives on a 
compact set $\Theta$. Let
$\hat{\theta}_{n}:=(\hat{\alpha}_{n},\hat{\beta}_{n},\hat{\gamma}_{n})$
and $\hat{\theta}_{n}^{MF}:=(\hat{\alpha}_{n}^{MF},\hat{\beta}_{n}^{MF},\hat{\gamma}_{n}^{MF})$ be the maximizers of  $\ell_{n}$ and $\ell_{n}^{MF}$, respectively, in the interior of $\Theta$.
Moreover, we assume that $\psi_{n}$ and $\psi_{n}^{MF}$ are differentiable and $\mu_{n}$- and $\mu_n^{MF}$-strongly convex in $(\alpha,\beta,\gamma)$, respectively, on $\Theta$, where $\mu_{n}>0$ and $\mu_{n}^{MF}>0$. Then
\begin{equation}\label{theta:bound}
\Vert\hat{\theta}_{n}-\hat{\theta}_{n}^{MF}\Vert
\leq\frac{2}{(\mu_{n}+\mu_{n}^{MF})^{\frac{1}{2}}}
\left[\sup_{\alpha,\beta,\gamma\in \Theta}
C_{1}^{\frac{1}{2}}(\alpha,\beta,\gamma)\left(\frac{\log n}{n} \right)^{\frac{1}{10}}+\sup_{\alpha,\beta,\gamma\in \Theta}C_{2}^{\frac{1}{2}}(\alpha,\beta,\gamma)n^{-\frac{1}{4}}\right],
\end{equation}
where $C_{1}$ and $C_{2}$ are defined in Theorem \ref{thm:mf_bounds}
and $\Vert\cdot\Vert$ denotes the Euclidean norm.
\end{proposition}
%\begin{proof}See Appendix.\end{proof}

In Proposition \ref{prop:bound}, if $\mu_{n}$ and $\mu_{n}^{MF}$ goes to
zero sufficiently fast as $n$ goes zero, then 
the bound in \eqref{theta:bound} may not go to zero as $n$ goes to zero. 
If for example $\mu_{n},\mu_{n}^{MF}$ are uniformly bounded from 
below, and both $\sup_{\alpha,\beta,\gamma\in \Theta}C_{1}(\alpha,\beta,\gamma)$
and $\sup_{\alpha,\beta,\gamma\in \Theta}C_{2}(\alpha,\beta,\gamma)$
are $O(1)$, then 
$\Vert\hat{\theta}_{n}-\hat{\theta}_{n}^{MF}\Vert=O(n^{-1/10}(\log n)^{1/10})$.

\subsection{Proof of Proposition \ref{prop:bound}}

We assume that 
$\psi_{n}$ (resp. $\psi_{n}^{MF}$)
is differentiable and $\mu_{n}$-strongly convex (resp. $\mu_{n}^{MF}$-strongly convex) in $\theta:=(\alpha,\beta,\gamma)\in\Theta$. 
Note that
\begin{equation*}
\ell_{n}=T_{n}-\psi_{n},
\qquad
\ell_{n}^{MF}=T_{n}-\psi_{n}^{MF},
\end{equation*}
and $T_{n}$ is linear in $\theta=(\alpha,\beta,\gamma)$, 
we have that $\ell_{n}$ (resp. $\ell_{n}^{MF}$)
is differentiable and $\mu_{n}$-strongly concave in $\theta:=(\alpha,\beta,\gamma)\in\Theta$
so that for any $x,y\in\Theta$, 
\begin{equation}
\ell_{n}(y)\leq\ell_{n}(x)+\nabla\ell_{n}(x)^{T}(y-x)-\frac{\mu_{n}}{2}\Vert y-x\Vert^{2},
\end{equation}
and in particular,
\begin{align}\label{eqn:add:1}
\ell_{n}(\hat{\theta}_{n}^{MF})
&\leq\ell_{n}(\hat{\theta}_{n})
+\nabla\ell_{n}(\hat{\theta}_{n})^{T}(\hat{\theta}_{n}^{MF}-\hat{\theta}_{n})
-\frac{\mu_{n}}{2}\Vert\hat{\theta}_{n}^{MF}-\hat{\theta}_{n}\Vert^{2}
\\
&=\ell_{n}(\hat{\theta}_{n})
-\frac{\mu_{n}}{2}\Vert\hat{\theta}_{n}^{MF}-\hat{\theta}_{n}\Vert^{2},
\nonumber
\end{align}
and similarly, for any $x,y\in\Theta$,
\begin{equation}
\ell_{n}^{MF}(y)\leq\ell_{n}^{MF}(x)+\nabla \ell_{n}^{MF}(x)^{T}(y-x)-\frac{\mu_{n}}{2}\Vert y-x\Vert^{2},
\end{equation}
and in particular,
\begin{align}\label{eqn:add:2}
\ell_{n}^{MF}(\hat{\theta}_{n})
&\leq\ell_{n}^{MF}(\hat{\theta}_{n}^{MF})
+\nabla\ell_{n}^{MF}(\hat{\theta}_{n}^{MF})^{T}(\hat{\theta}_{n}-\hat{\theta}_{n}^{MF})
-\frac{\mu_{n}^{MF}}{2}\Vert\hat{\theta}_{n}-\hat{\theta}_{n}^{MF}\Vert^{2}
\\
&=\ell_{n}^{MF}(\hat{\theta}_{n}^{MF})
-\frac{\mu_{n}^{MF}}{2}\Vert\hat{\theta}_{n}-\hat{\theta}_{n}^{MF}\Vert^{2}.
\nonumber
\end{align}
Adding the inequalities \eqref{eqn:add:1} and \eqref{eqn:add:2}, we get
\begin{align*}
\Vert\hat{\theta}_{n}-\hat{\theta}_{n}^{MF}\Vert^{2}
&\leq\frac{2}{\mu_{n}^{MF}+\mu_{n}}
\left[\left(\ell_{n}^{MF}(\hat{\theta}_{n}^{MF})-\ell_{n}(\hat{\theta}_{n}^{MF})\right)
+\left(\ell_{n}(\hat{\theta}_{n})-\ell_{n}^{MF}(\hat{\theta}_{n})\right)\right]
\\
&\leq
\frac{4}{\mu_{n}^{MF}+\mu_{n}}\sup_{\theta\in\Theta}|\ell_{n}^{MF}(\theta)-\ell_{n}(\theta)|.
\end{align*}
By applying Theorem \ref{thm:mf_bounds}, we get
\begin{align*}
\Vert\hat{\theta}_{n}-\hat{\theta}_{n}^{MF}\Vert
&\leq\frac{2}{(\mu_{n}+\mu_{n}^{MF})^{\frac{1}{2}}}
\left[\sup_{\alpha,\beta,\gamma\in \Theta}
C_{1}(\alpha,\beta,\gamma)n^{-\frac{1}{5}}(\log n)^{\frac{1}{5}}+\sup_{\alpha,\beta,\gamma\in \Theta}C_{2}(\alpha,\beta,\gamma)n^{-\frac{1}{2}}\right]^{\frac{1}{2}}
\\
&\leq\frac{2}{(\mu_{n}+\mu_{n}^{MF})^{\frac{1}{2}}}
\left[\sup_{\alpha,\beta,\gamma\in \Theta}
C_{1}^{\frac{1}{2}}(\alpha,\beta,\gamma)n^{-\frac{1}{10}}(\log n)^{\frac{1}{10}}+\sup_{\alpha,\beta,\gamma\in \Theta}C_{2}^{\frac{1}{2}}(\alpha,\beta,\gamma)n^{-\frac{1}{4}}\right],
\end{align*}
where the last step is due to the inequality
$\sqrt{x+y}\leq\sqrt{x}+\sqrt{y}$ for any $x,y\geq 0$.
The proof is complete.

%%%%%%%%%%%%%%%%%%%%%%%%%%%%%%%%%%
\section{Additional simulation results}

\subsection{No covariates, edges and two-stars model}
We have estimated a model with no covariates. This corresponds to a model in which $\tilde{\alpha}_2=0$ or $\alpha_1=\alpha_2=\alpha$. The results of our simulations for small networks are in Table \ref{tab:MC_-2_0_1}.  Our method performs relatively well in this simpler case. Indeed in this case there are results that would allow us to solve the variational problem in closed form for large $n$ \citep{DiaconisChatterjee2011, Mele2010a, AristoffZhu2014, RadinYin2013}. The MPLE and MCMC-MLE median estimate seems to converge to the true value as we increase $n$, but our approximation seems to perform slightly better here.

\begin{table}[!ht]
\caption{Monte Carlo estimates, comparison of three methods. True parameter vector is $(\tilde{\alpha}_1, \tilde{\alpha}_2, \beta)=(-2,0,1)$}
\centering
\begin{tabular}{lccccccccc}
  \hline
$n=50$ & \multicolumn{3}{c}{MCMC-MLE} & \multicolumn{3}{c}{MEAN-FIELD} & \multicolumn{3}{c}{MPLE} \\
  \hline
 & $\tilde{\alpha}_1$ & $\tilde{\alpha}_2$ & $\beta$ & $\tilde{\alpha}_1$ & $\tilde{\alpha}_2$ & $\beta$ & $\tilde{\alpha}_1$ & $\tilde{\alpha}_2$ & $\beta$  \\
  \hline
  median & -2.063 & 0.016 & -0.324 & -2.021 & 0.007 & 0.999 & -1.983 & 0.018 & -1.006 \\ 
  0.05 & -2.692 & -0.614 & -23.828 & -2.412 & -0.372 & 0.975 & -2.439 & -0.368 & -34.177 \\ 
  0.95 & -1.363 & 0.657 & 22.738 & -1.783 & 0.413 & 1.015 & -1.449 & 0.401 & 14.465 \\ 
   \hline\hline

$n=100$ & \multicolumn{3}{c}{MCMC-MLE} & \multicolumn{3}{c}{MEAN-FIELD} & \multicolumn{3}{c}{MPLE} \\
  \hline
 & $\tilde{\alpha}_1$ & $\tilde{\alpha}_2$ & $\beta$ & $\tilde{\alpha}_1$ & $\tilde{\alpha}_2$ & $\beta$ & $\tilde{\alpha}_1$ & $\tilde{\alpha}_2$ & $\beta$  \\
  \hline
  median & -1.970 & -0.042 & 0.221 & -1.981 & -0.017 & 1.000 & -1.949 & -0.023 & -1.231 \\ 
  0.05 & -2.241 & -0.333 & -13.226 & -2.101 & -0.194 & 0.993 & -2.168 & -0.196 & -14.402 \\ 
  0.95 & -1.602 & 0.249 & 16.316 & -1.874 & 0.134 & 1.012 & -1.643 & 0.142 & 9.328 \\ 
    \hline\hline

$n=200$ & \multicolumn{3}{c}{MCMC-MLE} & \multicolumn{3}{c}{MEAN-FIELD} & \multicolumn{3}{c}{MPLE} \\
  \hline
 & $\tilde{\alpha}_1$ & $\tilde{\alpha}_2$ & $\beta$ & $\tilde{\alpha}_1$ & $\tilde{\alpha}_2$ & $\beta$ & $\tilde{\alpha}_1$ & $\tilde{\alpha}_2$ & $\beta$  \\
  \hline
  median & -2.012 & -0.005 & 1.483 & -1.998 & 0.002 & 1.000 & -2.003 & -0.001 & 1.225 \\ 
  0.05 & -2.214 & -0.184 & -9.515 & -2.067 & -0.093 & 0.997 & -2.160 & -0.095 & -9.682 \\ 
  0.95 & -1.796 & 0.161 & 12.179 & -1.935 & 0.091 & 1.003 & -1.790 & 0.095 & 8.784 \\ 
   \hline\hline
\end{tabular}
\flushleft Notes. See notes for Table \ref{tab:MC_-2_1_1_1_twostar_triangles}.
\label{tab:MC_-2_0_1}
\end{table}

\subsection{Model with 2-stars}
In this subsection we report estimates of a model where the triangle term is excluded from the specification ( $\gamma=0$ in log-likelihood \eqref{eq:loglik_simulations_R}). In Table \ref{tab:MC_-2_1_2} we report results for 100 simulations of a model with $(\tilde{\alpha}_1, \tilde{\alpha}_2, \beta)=(-2,1,2)$. We run simulations for networks of size $n=50,100,200$, to show how our method compares to MCMC-MLE and MPLE when the size of the network grows. In general, we expect more precise results as $n$ grows large. 

The results are encouraging and the mean-field approximation seems to behave as expected. Indeed, the median estimate is very close to the true parameters that generate the data. As the size of the network grows from $n=50$ to $n=200$, both MCMC-MLE and MPLE also improve in precision. The fastest method in terms of computational time is the MPLE. This is because the MPLE's speed depends on the number of parameters. Our mean-field approximation is as fast as the MCMC-MLE.

\begin{table}[!ht]
\caption{Monte Carlo estimates, comparison of three methods. True parameter vector is $(\tilde{\alpha}_1, \tilde{\alpha}_2, \beta)=(-2,1,2)$}
\centering
\begin{tabular}{lccccccccc}
  \hline\hline
$n=50$ & \multicolumn{3}{c}{MCMC-MLE} & \multicolumn{3}{c}{MEAN-FIELD} & \multicolumn{3}{c}{MPLE} \\
  \hline
 & $\tilde{\alpha}_1$ & $\tilde{\alpha}_2$ & $\beta$ & $\tilde{\alpha}_1$ & $\tilde{\alpha}_2$ & $\beta$ & $\tilde{\alpha}_1$ & $\tilde{\alpha}_2$ & $\beta$  \\
  \hline
% latex table generated in R 3.5.2 by xtable 1.8-3 package
% Wed Mar 27 14:11:51 2019
  median & -2.015 & 0.999 & 2.303 & -1.993 & 1.000 & 2.004 & -1.996 & 0.998 & 1.820 \\ 
  0.05 & -2.433 & 0.641 & -1.085 & -2.060 & 0.885 & 1.916 & -2.325 & 0.780 & -2.556 \\ 
  0.95 & -1.666 & 1.337 & 6.118 & -1.905 & 1.090 & 2.087 & -1.573 & 1.273 & 4.783 \medskip\\ 
   \hline\hline
$n=100$ & \multicolumn{3}{c}{MCMC-MLE} & \multicolumn{3}{c}{MEAN-FIELD} & \multicolumn{3}{c}{MPLE} \\
  \hline
 & $\tilde{\alpha}_1$ & $\tilde{\alpha}_2$ & $\beta$ & $\tilde{\alpha}_1$ & $\tilde{\alpha}_2$ & $\beta$ & $\tilde{\alpha}_1$ & $\tilde{\alpha}_2$ & $\beta$  \\
  \hline
  median & -1.995 & 1.012 & 1.932 & -1.980 & 1.011 & 2.011 & -1.980 & 1.010 & 1.783 \\ 
  0.05 & -2.189 & 0.861 & 0.701 & -2.032 & 0.969 & 1.992 & -2.175 & 0.901 & 0.329 \\ 
  0.95 & -1.833 & 1.157 & 3.314 & -1.944 & 1.044 & 2.088 & -1.816 & 1.141 & 2.867  \medskip\\
   \hline
$n=200$ & \multicolumn{3}{c}{MCMC-MLE} & \multicolumn{3}{c}{MEAN-FIELD} & \multicolumn{3}{c}{MPLE} \\
  \hline
 & $\tilde{\alpha}_1$ & $\tilde{\alpha}_2$ & $\beta$ & $\tilde{\alpha}_1$ & $\tilde{\alpha}_2$ & $\beta$ & $\tilde{\alpha}_1$ & $\tilde{\alpha}_2$ & $\beta$  \\
  \hline
  median & -2.000 & 1.009 & 1.938 & -1.986 & 1.005 & 2.016 & -1.997 & 1.007 & 1.930 \\ 
  0.05 & -2.182 & 0.925 & 0.843 & -2.004 & 0.932 & 1.999 & -2.176 & 0.950 & 0.592 \\ 
  0.95 & -1.882 & 1.087 & 4.119 & -1.935 & 1.028 & 2.214 & -1.847 & 1.069 & 3.541 \medskip\\ 
   \hline\hline
\end{tabular}
\flushleft Notes. See notes for Table \ref{tab:MC_-2_1_1_1_twostar_triangles}.
\label{tab:MC_-2_1_2}
\end{table}

The second set of Monte Carlo experiments is reported in Table \ref{tab:MC_-2_1_3}, where the data are generated by parameter vector $(\tilde{\alpha}_1, \tilde{\alpha}_2, \beta)=(-2,1,3)$. The pattern is similar to the previous table, but the mean field estimates exhibit a little more bias.

\begin{table}[!ht]
\caption{Monte Carlo estimates, comparison of three methods. True parameter vector is $(\tilde{\alpha}_1, \tilde{\alpha}_2, \beta)=(-2,1,3)$}
\centering
\begin{tabular}{lccccccccc}
  \hline
$n=50$ & \multicolumn{3}{c}{MCMC-MLE} & \multicolumn{3}{c}{MEAN-FIELD} & \multicolumn{3}{c}{MPLE} \\
  \hline
 & $\tilde{\alpha}_1$ & $\tilde{\alpha}_2$ & $\beta$ & $\tilde{\alpha}_1$ & $\tilde{\alpha}_2$ & $\beta$ & $\tilde{\alpha}_1$ & $\tilde{\alpha}_2$ & $\beta$  \\
  \hline
  median & -1.978 & 1.010 & 2.742 & -1.958 & 1.026 & 3.025 & -1.921 & 1.016 & 2.357 \\ 
  0.05 & -2.308 & 0.745 & 1.342 & -2.045 & 0.878 & 2.938 & -2.201 & 0.823 & -0.742 \\ 
  0.95 & -1.689 & 1.229 & 4.466 & -1.811 & 1.141 & 3.468 & -1.547 & 1.202 & 4.288  \medskip\\
   \hline\hline

$n=100$ & \multicolumn{3}{c}{MCMC-MLE} & \multicolumn{3}{c}{MEAN-FIELD} & \multicolumn{3}{c}{MPLE} \\
  \hline
 & $\tilde{\alpha}_1$ & $\tilde{\alpha}_2$ & $\beta$ & $\tilde{\alpha}_1$ & $\tilde{\alpha}_2$ & $\beta$ & $\tilde{\alpha}_1$ & $\tilde{\alpha}_2$ & $\beta$  \\
  \hline
  median & -2.005 & 1.002 & 3.022 & -1.851 & 1.091 & 3.166 & -1.997 & 1.001 & 3.009 \\ 
  0.05 & -2.116 & 0.892 & 2.665 & -2.274 & 0.866 & 2.998 & -2.098 & 0.924 & 2.514 \\ 
  0.95 & -1.902 & 1.110 & 3.414 & -1.670 & 1.861 & 4.092 & -1.895 & 1.096 & 3.425  \medskip\\
   \hline\hline

$n=200$ & \multicolumn{3}{c}{MCMC-MLE} & \multicolumn{3}{c}{MEAN-FIELD} & \multicolumn{3}{c}{MPLE} \\
  \hline
 & $\tilde{\alpha}_1$ & $\tilde{\alpha}_2$ & $\beta$ & $\tilde{\alpha}_1$ & $\tilde{\alpha}_2$ & $\beta$ & $\tilde{\alpha}_1$ & $\tilde{\alpha}_2$ & $\beta$  \\
  \hline
  median & -2.003 & 1.000 & 2.959 & -1.923 & 1.030 & 3.107 & -1.984 & 1.000 & 2.847 \\ 
  0.05 & -2.151 & 0.934 & 2.314 & -2.059 & 0.922 & 3.000 & -2.104 & 0.951 & 2.096 \\ 
  0.95 & -1.902 & 1.064 & 3.944 & -1.836 & 1.164 & 4.222 & -1.861 & 1.039 & 3.666 \\ 
   \hline\hline
\end{tabular}
\flushleft Notes. See notes for Table \ref{tab:MC_-2_1_1_1_twostar_triangles}.
\label{tab:MC_-2_1_3}
\end{table}

\subsection{Model with triangles}
The second set of simulations involves a model with no two-stars, that is $\beta=0$, in Table \ref{tab:MC_-2_1_-2_triangles}. In this specification our mean-field approximation seems to do better than the other estimators, at least for this small networks.

\begin{table}[!ht]
\caption{Monte Carlo estimates, comparison of three methods. True parameter vector is $(\tilde{\alpha}_1, \tilde{\alpha}_2, \gamma)=(-2,1,-2)$}
\centering
\begin{tabular}{lccccccccc}
  \hline
$n=50$ & \multicolumn{3}{c}{MCMC-MLE} & \multicolumn{3}{c}{MEAN-FIELD} & \multicolumn{3}{c}{MPLE} \\
  \hline
 & $\tilde{\alpha}_1$ & $\tilde{\alpha}_2$ & $\gamma  $ & $\tilde{\alpha}_1$ & $\tilde{\alpha}_2$ & $\gamma$ & $\tilde{\alpha}_1$ & $\tilde{\alpha}_2$ & $\gamma $  \\
  \hline
  median & -2.024 & 1.026 & -13.959 & -2.000 & 1.005 & -2.000 & -2.031 & 1.012 & -9.804 \\ 
  0.05 & -2.384 & 0.622 & -60.419 & -2.321 & 0.168 & -6.425 & -2.398 & 0.758 & -45.881 \\ 
  0.95 & -1.689 & 1.457 & 49.585 & -0.739 & 2.246 & -1.777 & -1.809 & 1.394 & 21.696  \medskip\\
   \hline\hline

$n=100$ & \multicolumn{3}{c}{MCMC-MLE} & \multicolumn{3}{c}{MEAN-FIELD} & \multicolumn{3}{c}{MPLE} \\
  \hline
 & $\tilde{\alpha}_1$ & $\tilde{\alpha}_2$ & $\gamma  $ & $\tilde{\alpha}_1$ & $\tilde{\alpha}_2$ & $\gamma$ & $\tilde{\alpha}_1$ & $\tilde{\alpha}_2$ & $\gamma $  \\
  \hline
   median & -2.006 & 1.019 & -6.053 & -1.967 & 1.035 & -2.007 & -2.002 & 1.015 & -4.980 \\ 
  0.05 & -2.164 & 0.832 & -35.171 & -3.472 & 0.951 & -7.368 & -2.124 & 0.876 & -23.937 \\ 
  0.95 & -1.824 & 1.183 & 27.361 & -1.388 & 3.763 & -1.910 & -1.890 & 1.153 & 13.519   \medskip\\
   \hline\hline

$n=200$ & \multicolumn{3}{c}{MCMC-MLE} & \multicolumn{3}{c}{MEAN-FIELD} & \multicolumn{3}{c}{MPLE} \\
  \hline
 & $\tilde{\alpha}_1$ & $\tilde{\alpha}_2$ & $\gamma  $ & $\tilde{\alpha}_1$ & $\tilde{\alpha}_2$ & $\gamma$ & $\tilde{\alpha}_1$ & $\tilde{\alpha}_2$ & $\gamma $  \\
  \hline
  median & -2.007 & 1.001 & -1.002 & -1.972 & 1.031 & -2.006 & -2.003 & 1.000 & -1.913 \\ 
  0.05 & -2.083 & 0.901 & -23.049 & -2.014 & 1.008 & -2.115 & -2.061 & 0.929 & -15.721 \\ 
  0.95 & -1.931 & 1.095 & 16.760 & -1.473 & 1.636 & -1.983 & -1.952 & 1.072 & 9.153 \medskip\\ 
   \hline\hline
\end{tabular}
\flushleft Notes. See notes for Table \ref{tab:MC_-2_1_1_1_twostar_triangles}.
\label{tab:MC_-2_1_-2_triangles}
\end{table}

\newpage

\begin{table}[ht]
\caption{Monte Carlo estimates, comparison of three methods. True parameter vector is $(\tilde{\alpha}_1, \tilde{\alpha}_2, \beta,\gamma)=(-2,1,-1,-1)$}
\centering
\begin{small}
\begin{tabular}{lcccccccccccc}
  \hline
$n=50$ & \multicolumn{4}{c}{MCMC-MLE} & \multicolumn{4}{c}{MEAN-FIELD} & \multicolumn{4}{c}{MPLE} \\
  \hline
 & $\tilde{\alpha}_1$ & $\tilde{\alpha}_2$ & $\beta$ & $\gamma  $ & $\tilde{\alpha}_1$ & $\tilde{\alpha}_2$ & $\beta$  & $\gamma$ & $\tilde{\alpha}_1$ & $\tilde{\alpha}_2$ & $\beta$  & $\gamma $  \\
  \hline
median & -2.008 & 1.023 & -1.256 & -4.943 & -1.977 & 1.030 & -1.018 & -1.002 & -1.959 & 1.015 & -2.032 & -3.296 \\ 
  mad & 0.320 & 0.267 & 4.898 & 43.074 & 0.153 & 0.165 & 0.144 & 0.154 & 0.307 & 0.191 & 4.532 & 24.826 \\ 
 \hline\hline

$n=100$ & \multicolumn{4}{c}{MCMC-MLE} & \multicolumn{4}{c}{MEAN-FIELD} & \multicolumn{4}{c}{MPLE} \\
  \hline
median & -1.996 & 1.004 & -1.138 & -3.173 & -1.932 & 1.177 & -1.057 & -1.021 & -1.974 & 1.006 & -1.566 & -1.489 \\ 
  mad & 0.219 & 0.133 & 3.364 & 28.410 & 0.567 & 0.553 & 0.335 & 0.346 & 0.207 & 0.093 & 3.119 & 16.695 \\ 
  \hline\hline

$n=200$& \multicolumn{4}{c}{MCMC-MLE} & \multicolumn{4}{c}{MEAN-FIELD} & \multicolumn{4}{c}{MPLE} \\
  \hline
median & -1.995 & 1.007 & -1.155 & -0.980 & -1.603 & 1.645 & -1.317 & -1.078 & -1.987 & 1.003 & -1.340 & -1.308 \\ 
  mad & 0.133 & 0.069 & 2.098 & 18.167 & 0.559 & 0.794 & 0.656 & 0.558 & 0.127 & 0.047 & 2.064 & 11.196 \\ 
   \hline\hline

$n=500$& \multicolumn{4}{c}{MCMC-MLE} & \multicolumn{4}{c}{MEAN-FIELD} & \multicolumn{4}{c}{MPLE} \\
  \hline
median & -1.998 & 1.002 & -1.070 & -1.315 & -1.682 & 1.836 & -1.431 & -1.155 & -1.991 & 1.000 & -1.113 & -1.227 \\ 
  mad & 0.084 & 0.033 & 1.496 & 10.897 & 0.805 & 0.849 & 0.776 & 0.883 & 0.079 & 0.020 & 1.340 & 7.036 \\ 
   \hline\hline

%$n=1000$& \multicolumn{4}{c}{MCMC-MLE} & \multicolumn{4}{c}{MEAN-FIELD} & \multicolumn{4}{c}{MPLE} \\
%  \hline
% & $\tilde{\alpha}_1$ & $\tilde{\alpha}_2$ & $\beta$ & $\gamma  $ & $\tilde{\alpha}_1$ & $\tilde{\alpha}_2$ & $\beta$  & %$\gamma$ & $\tilde{\alpha}_1$ & $\tilde{\alpha}_2$ & $\beta$  & $\gamma $  \\
%   \hline
%*  median & -1.976 & 0.986 & 1.578 & 0.483 & -1.844 & 0.916 & 5.223 & -1.000 & -1.970 & 0.994 & 1.551 & -0.960 \\ 
%  0.05 & -2.191 & 0.894 & -0.231 & -13.421 & -1.978 & 0.812 & 4.321 & -1.084 & -2.173 & 0.924 & -0.822 & -11.144 \\ 
%  0.95 & -1.805 & 1.089 & 3.920 & 14.010 & -1.783 & 1.033 & 6.900 & -0.920 & -1.736 & 1.076 & 3.857 & 11.229 \medskip\\ 
%   \hline\hline
\end{tabular}
\end{small}
\flushleft Notes: see notes for Table \ref{tab:MC_-2_1_1_1_twostar_triangles}.
\label{tab:MC_-2_-1_-1_-1_twostar_triangles}
\end{table}

\begin{table}[ht]
\caption{Monte Carlo estimates, comparison of three methods. True parameter vector is $(\tilde{\alpha}_1, \tilde{\alpha}_2, \beta,\gamma)=(-2,1,-2,3)$}
\centering
\begin{small}
\begin{tabular}{lcccccccccccc}
  \hline
$n=50$ & \multicolumn{4}{c}{MCMC-MLE} & \multicolumn{4}{c}{MEAN-FIELD} & \multicolumn{4}{c}{MPLE} \\
  \hline
 & $\tilde{\alpha}_1$ & $\tilde{\alpha}_2$ & $\beta$ & $\gamma  $ & $\tilde{\alpha}_1$ & $\tilde{\alpha}_2$ & $\beta$  & $\gamma$ & $\tilde{\alpha}_1$ & $\tilde{\alpha}_2$ & $\beta$  & $\gamma $  \\
  \hline
median & -2.005 & 1.024 & -2.368 & -4.197 & -1.955 & 1.037 & -2.022 & 2.998 & -1.958 & 1.017 & -3.006 & -0.198 \\ 
  mad & 0.349 & 0.292 & 5.767 & 46.688 & 0.095 & 0.085 & 0.088 & 0.082 & 0.307 & 0.196 & 4.707 & 26.076 \\ 
 \hline\hline

$n=100$ & \multicolumn{4}{c}{MCMC-MLE} & \multicolumn{4}{c}{MEAN-FIELD} & \multicolumn{4}{c}{MPLE} \\
  \hline
median & -2.000 & 0.995 & -2.333 & 1.560 & -1.909 & 1.082 & -2.100 & 2.983 & -1.972 & 0.997 & -2.708 & 2.617 \\ 
  mad & 0.216 & 0.145 & 3.429 & 31.810 & 0.151 & 0.147 & 0.199 & 0.130 & 0.195 & 0.099 & 3.221 & 17.184 \\ 
  \hline\hline

$n=200$& \multicolumn{4}{c}{MCMC-MLE} & \multicolumn{4}{c}{MEAN-FIELD} & \multicolumn{4}{c}{MPLE} \\
  \hline
median & -1.998 & 0.997 & -2.062 & 1.847 & -1.593 & 1.512 & -2.849 & 2.711 & -1.985 & 0.999 & -2.321 & 2.326 \\ 
  mad & 0.129 & 0.073 & 2.302 & 22.032 & 0.565 & 0.677 & 1.195 & 0.594 & 0.124 & 0.049 & 2.167 & 13.057 \\ 
   \hline\hline

$n=500$& \multicolumn{4}{c}{MCMC-MLE} & \multicolumn{4}{c}{MEAN-FIELD} & \multicolumn{4}{c}{MPLE} \\
  \hline
median & -2.004 & 1.002 & -1.944 & 2.531 & -1.523 & 1.605 & -3.493 & 2.557 & -2.002 & 1.002 & -2.059 & 2.786 \\ 
  mad & 0.091 & 0.038 & 1.579 & 11.813 & 0.782 & 0.726 & 1.472 & 0.982 & 0.080 & 0.024 & 1.472 & 8.068 \\ 
   \hline\hline

%$n=1000$& \multicolumn{4}{c}{MCMC-MLE} & \multicolumn{4}{c}{MEAN-FIELD} & \multicolumn{4}{c}{MPLE} \\
%  \hline
% & $\tilde{\alpha}_1$ & $\tilde{\alpha}_2$ & $\beta$ & $\gamma  $ & $\tilde{\alpha}_1$ & $\tilde{\alpha}_2$ & $\beta$  & %$\gamma$ & $\tilde{\alpha}_1$ & $\tilde{\alpha}_2$ & $\beta$  & $\gamma $  \\
%   \hline
%*  median & -1.976 & 0.986 & 1.578 & 0.483 & -1.844 & 0.916 & 5.223 & -1.000 & -1.970 & 0.994 & 1.551 & -0.960 \\ 
%  0.05 & -2.191 & 0.894 & -0.231 & -13.421 & -1.978 & 0.812 & 4.321 & -1.084 & -2.173 & 0.924 & -0.822 & -11.144 \\ 
%  0.95 & -1.805 & 1.089 & 3.920 & 14.010 & -1.783 & 1.033 & 6.900 & -0.920 & -1.736 & 1.076 & 3.857 & 11.229 \medskip\\ 
%   \hline\hline
\end{tabular}
\end{small}
\flushleft Notes: see notes for Table \ref{tab:MC_-2_1_1_1_twostar_triangles}.
\label{tab:MC_-2_1_-2_3_twostar_triangles}
\end{table}

\subsection{Some examples of nonconvergence}
In Tables \ref{tab:MC_-2_-1_-1_-1_twostar_triangles} and \ref{tab:MC_-2_1_-2_3_twostar_triangles} we show examples in which our mean-field approximation performs worse than the alternative estimators. 
There are several possible explanations for this poor convergence. First, it may be that we are not finding the maximizer of the approximation variational problem \eqref{eqn:mean-fieldproblem}, given the local nature of updates \eqref{eq:mean-field_update}. In these simulations we do not start the matrix $\bm{\mu}^{(0)}$ at different initial values, therefore we converge to a local maximum that may not be global. Our package \texttt{mfergm} allows the researcher to initialize $\bm{\mu}^{(0)}$ at different random starting points. This can improve convergence. In principle we should increase the number of re-starts as $n$ grows, as it is known that these models may have multiple modes. Ideally, one can use a Nelder-Mead or Simulated Annealing algorithm to find the maximizer of the variational problem, but this is more time-consuming. All these ideas lead to simple parallelization of our package's functions that are beyond the scope of the present work. 
Second, the tolerance level that we use $\epsilon_{tol}=0.0001$ may be too large. Third, the likelihood may exhibit a phase transition and thus a small difference in parameters may cause a large change in the behavior of the model. We conjecture that some of these issues are related to identification and we plan to explore this in future work.

\subsection{A note on computational speed}
In our Monte Carlo exercises, we note that the computational speed of the three estimators is similar for small networks. For $n=100$, the mean-field approximation takes about $3.5s$ to estimate the model, while an MCMC-MLE with a burnin of $100,000$ and sampling every $1000$ iterations takes approximately $5.5s$ and the MPLE takes about $1.7s$. For $n=50$ the estimates take $1.6s$ for mean-field, $4s$ for MCMC-MLE and $1.2s$ for MPLE. 

However, for larger networks, our code is computationally inefficient and results in much larger computational time than using the built-in functions in the \texttt{ergm} package in R for MCMC-MLE and MPLE. We have experimented with faster iterative routines that could speed up the approximate solution of the variational mean-field problem, but these are not fully stable. Additionally our code does not make efficient use of the memory, as the matrix $\mu$ is dense and we are not using efficient matrix algebra libraries to speed up the computation. We believe that such improvement in our benchmark code will make computational time comparable to MPLE.

\newpage
\newpage

\begin{center} \begin{Large} \textbf{ONLINE APPENDIX - NOT FOR PUBLICATION}\end{Large}\end{center}

\section{Asymptotic Results}

In this section we consider the model as $n\rightarrow\infty$. 
We have seen previously that the log normalizing constant
$\psi_{n}(\alpha,\beta,\gamma)$ can be approximated
by $\psi_{n}^{MF}(\bm{\mu}(\alpha,\beta,\gamma))$
by the mean-field approximation, 
where $\bm{\mu}(\alpha,\beta,\gamma)$ solves
the optimization problem in \eqref{eqn:mean-fieldproblem}
and $\psi_{n}^{MF}(\bm{\mu}(\alpha,\beta,\gamma))$
is its optimal value, where we recall that
\begin{align*}
\psi_{n}^{MF}(\bm{\mu}(\alpha,\beta,\gamma))
&=\sup_{\bm{\mu}\in[0,1]^{n^{2}}:\mu_{ij}=\mu_{ji},\forall i,j}
\Bigg\{\frac{1}{n^{2}}\sum_{i,j}\alpha_{ij}\mu_{ij}
+\frac{\beta}{2n^{3}}\sum_{i,j,k}\mu_{ij}\mu_{jk}
+\frac{2\gamma}{3n^{3}}\sum_{i,j,k}\mu_{ij}\mu_{jk}\mu_{ki}
\\
&\qquad\qquad
-\frac{1}{2n^{2}}\sum_{i,j}[\mu_{ij}\log\mu_{ij}+(1-\mu_{ij})\log(1-\mu_{ij})]\Bigg\},
\end{align*}

We will study the limit as $n\rightarrow\infty$. 
Before we proceed, we need a representation
of the vector $\alpha$ in the infinite network. The following
assumption guarantee that we can switch from the discrete
to the continuum.
\begin{axiom}\label{AssumpI}
Assume that
\begin{equation*}
\alpha_{ij}=\alpha\left(i/n,j/n\right),
\end{equation*}
where $\alpha(x,y):[0,1]^{2}\rightarrow\mathbb{R}$ is a deterministic exogenous function that is symmetric,
i.e., $\alpha(x,y)=\alpha(y,x)$.
\footnote{To ease the notations, we project $\otimes_{j=1}^{S}\mathcal{X}_{j}$ onto $[0,1]$
and the function $\alpha(\tau_{i},\tau_{j})$ defined previously is now re-defined from $[0,1]^{2}$ to $\mathbb{R}$.}
\end{axiom}

Since we have $n$ players, the number of types for the players must be finite, although
it may grow as $n$ grows. $\alpha_{ij}$ are symmetric, and can take at most $\frac{n(n+1)}{2}$ values.
As $n\rightarrow\infty$, the number of types can become infinite and
$\alpha(x,y)$ may take infinitely many values. On the other hand, in terms of practical applications,
finitely many values often suffice
\footnote{If an entry of the vector $\tau_{i}$ is
continuous, we can always transform the variable in a discrete vector
using thresholds. For example, if $\mathcal{X}_{j}=\text{[\$50,000,\$200,000]}$, 
we can bucket the incomes into three levels, low: [\$50,000,\$100,000), medium [\$100,000,\$150,000) and high: [\$150,000,
\$200,000].}.

\begin{axiom}\label{AssumpII}
We assume that $\alpha(x,y)$ is uniformly bounded in $x$ and $y$:
\begin{equation}
\sup_{(x,y)\in[0,1]^{2}}|\alpha(x,y)|<\infty.
\end{equation}
\end{axiom}

\begin{figure}
\caption{Examples of function $\alpha(x,y)$.}
\centering
\includegraphics[scale=.35]{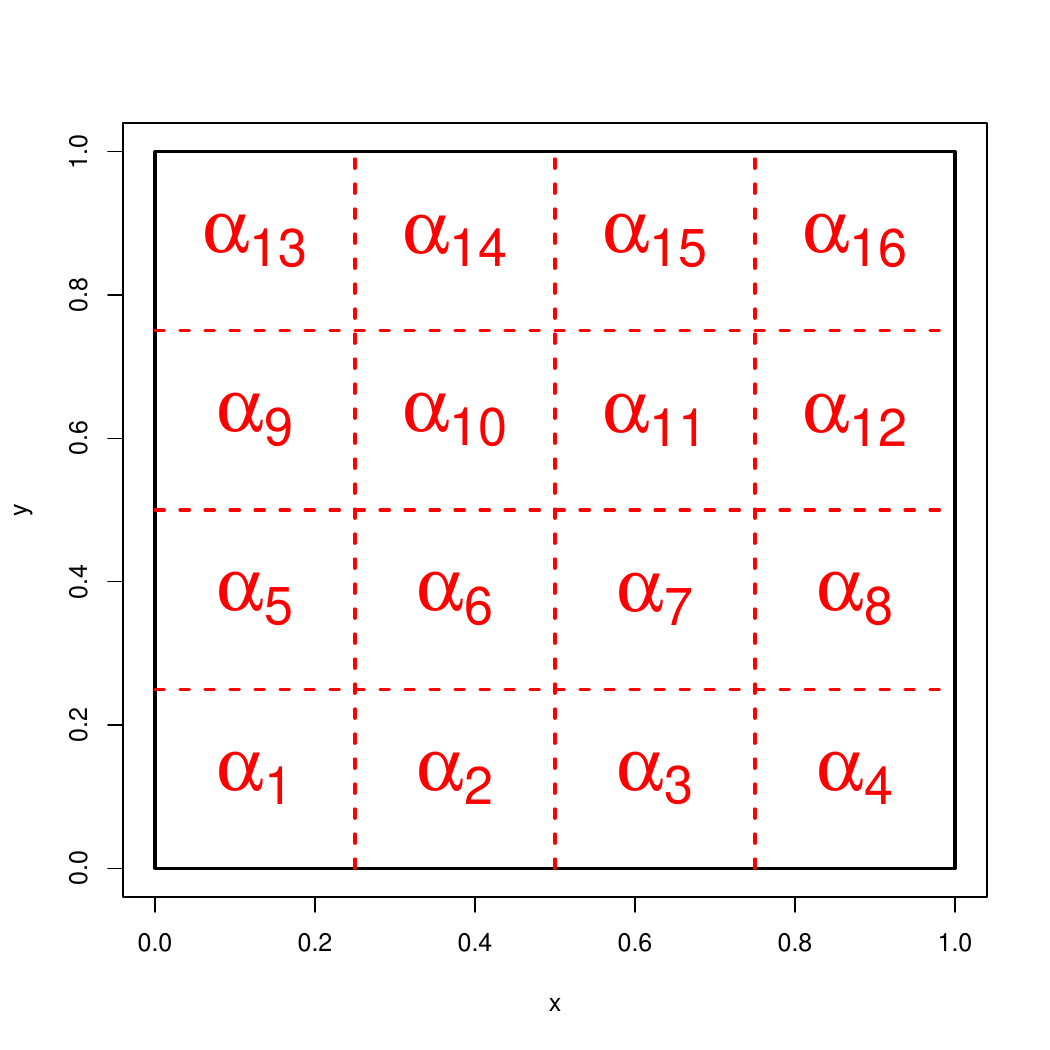}
\includegraphics[scale=.35]{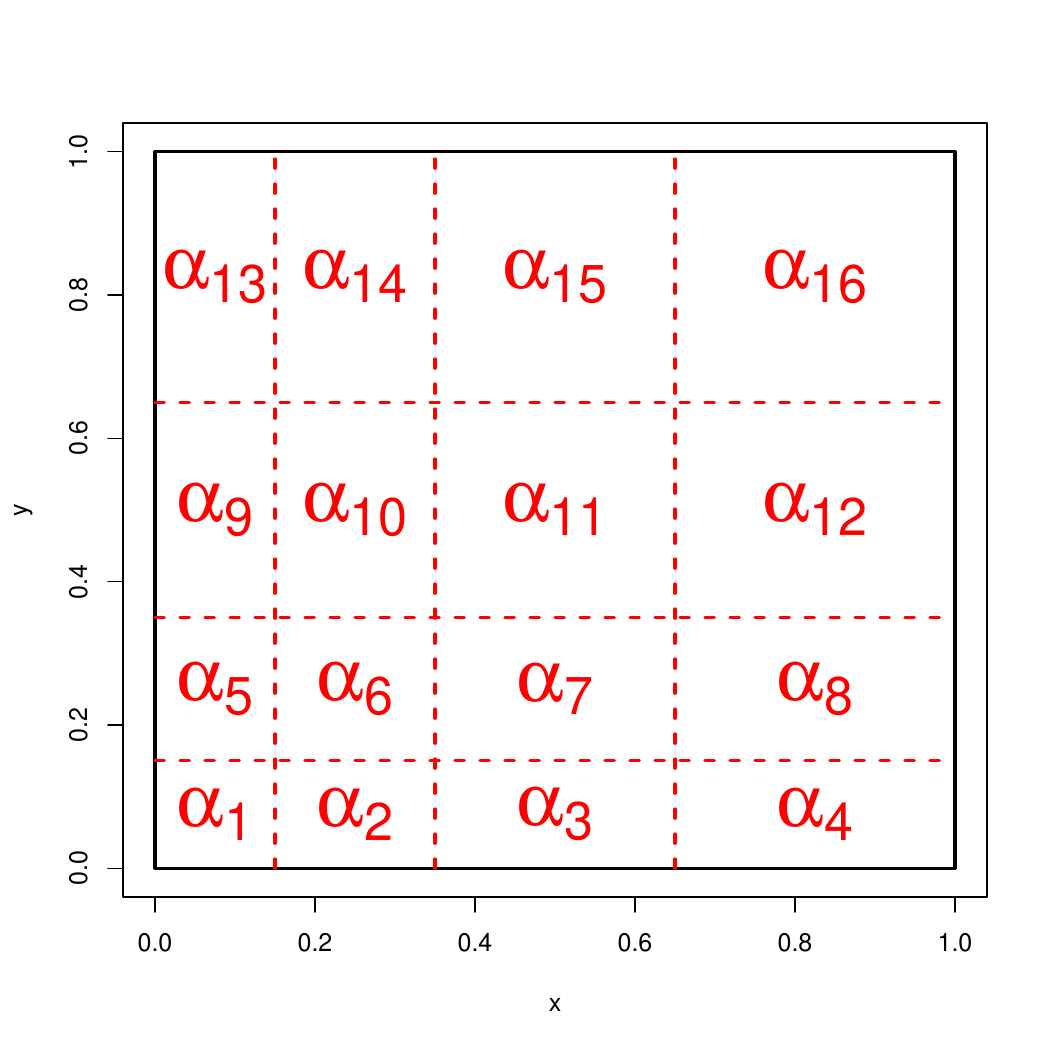}\\
(A)\hspace{6cm}(B)\\
\includegraphics[scale=.35]{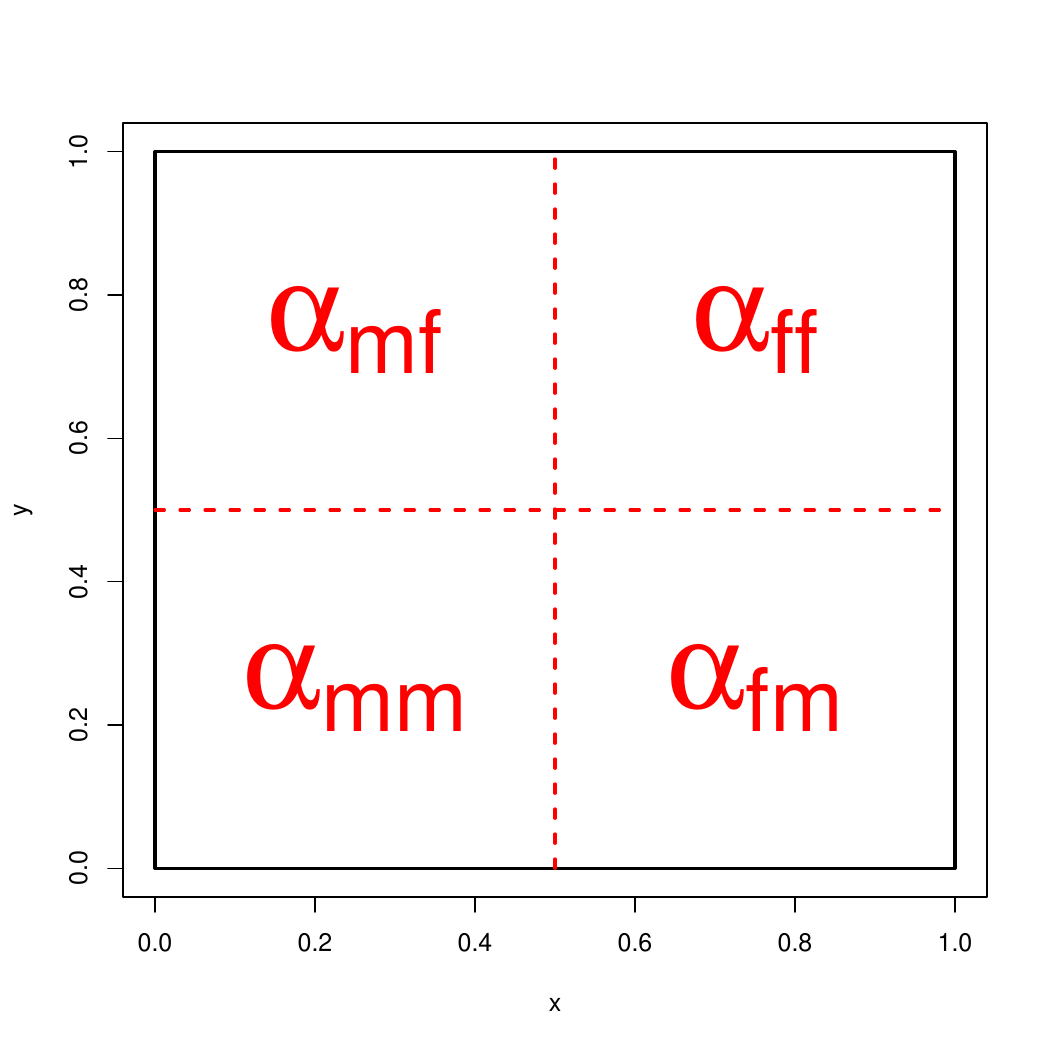}
\includegraphics[scale=.35]{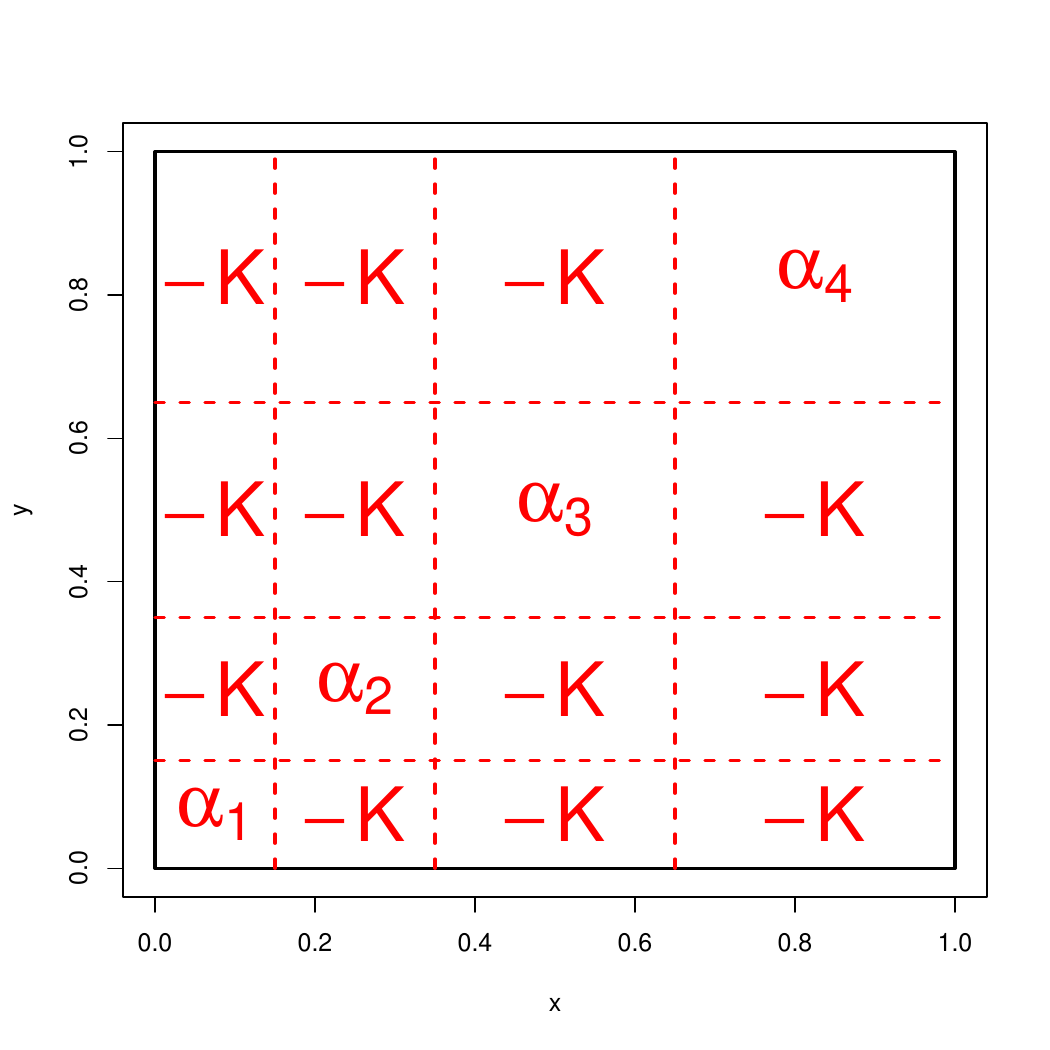}\\
(C)\hspace{6cm}(D)\\
\flushleft The figure provides several examples of possible partitions of the net benefit function $\alpha(x,y)$ with finite covariates.
The asymptotic version of this function is defined over the unit square. 
\label{fig:partitionsalpha}
\end{figure}
As a simple example, let us consider gender: the population consists
of males and female agents. 
For example, half of the nodes (population)
are males, say $i=1,2,\ldots,\frac{n}{2}$
and the other half are females, $i=\frac{n}{2}+1,\frac{n}{2}+2,\ldots,n$.\footnote{Here, we assume without loss
of generality that $n$ is an even number.} That means, $\alpha(x,y)$ takes three values according
to the three regions:
\begin{eqnarray*}
&\left\{(x,y): 0<x,y<\frac{1}{2}\right\},
\\
&\left\{(x,y): \frac{1}{2}<x,y<1\right\},
\\
&\left\{(x,y):0<x<\frac{1}{2}<y<1\right\}\bigcup
\left\{(x,y):0<y<\frac{1}{2}<x<1\right\},
\end{eqnarray*}
and these three regions correspond precisely to pairs: male-male,
female-female, and male-female. This example is represented in Figure \ref{fig:partitionsalpha}(C).\\

The work of \citet{DiaconisChatterjee2011} show that the variational problem in (\ref{eqn:varapproxgeneral}) translates into an analogous variational problem
for the graph limit.\footnote{See also \citet{Mele2010a}  for similar results in a directed network.} 
In the special case $\alpha(x,y)\equiv\alpha$, 
it is shown in \citet{DiaconisChatterjee2011} that as
$n\rightarrow\infty$ the log-constant of the ERGM converges
to the solution of the variational problem (\ref{eqn:varproblem_CD2013}),
that is
\begin{equation}\label{psinpsi}
\psi_{n}(\alpha,\beta,\gamma)\rightarrow\psi(\alpha,\beta,\gamma),
\end{equation}
where
\begin{align}
\psi(\alpha,\beta,\gamma) &=\sup_{h\in\mathcal{W}}
\bigg\{\alpha\int_{0}^{1}\int_{0}^{1}h(x,y)dxdy+\frac{\beta}{2}\int_{0}^{1}\int_{0}^{1}\int_{0}^{1}h(x,y)h(y,z)dxdydz
\label{eqn:varproblem_CD2013}
\\
&\qquad
+\frac{2\gamma}{3}\int_{0}^{1}\int_{0}^{1}\int_{0}^{1}h(x,y)h(y,z)h(z,x)dxdydz
- \frac{1}{2}\int_{0}^{1}\int_{0}^{1}I(h(x,y))dxdy\bigg\},
\nonumber
\end{align}
where 
\begin{equation}
\mathcal{W}:=\left\{h:[0,1]^{2}\rightarrow[0,1],h(x,y)=h(y,x), 0\leq x,y\leq 1\right\},
\end{equation}
and we define the entropy function:
\begin{equation*}
I(x):=x\log x+(1-x)\log(1-x),\qquad 0\leq x\leq 1,
\end{equation*}
with $I(0)=I(1)=0$.

In essence the first three terms in \eqref{eqn:varproblem_CD2013} correspond to the expected potential function in the continuum, while the
last term in \eqref{eqn:varproblem_CD2013} corresponds to the entropy of the graph limit.

We will show that \eqref{psinpsi} holds with 
\begin{align}\label{psi:general}
\psi(\alpha,\beta,\gamma) &=\sup_{h\in\mathcal{W}}
\bigg\{\int_{0}^{1}\int_{0}^{1}\alpha(x,y)h(x,y)dxdy+\frac{\beta}{2}\int_{0}^{1}\int_{0}^{1}\int_{0}^{1}h(x,y)h(y,z)dxdydz
\\
&\qquad
+\frac{2\gamma}{3}\int_{0}^{1}\int_{0}^{1}\int_{0}^{1}h(x,y)h(y,z)h(z,x)dxdydz
- \frac{1}{2}\int_{0}^{1}\int_{0}^{1}I(h(x,y))dxdy\bigg\},
\nonumber
\end{align}

The function $h$ in the expressions above is known
as the \emph{graphon} from the graph limits literature
\footnote{See \citet{LovaszBook2012}, \citet{BorgsEtAl2008}}, 
large deviations literature for random graphs\footnote{See \citet{ChatterjeeVaradhan2011},
\citet{DiaconisChatterjee2011}} and analysis of the resulting variational problem.\footnote{See
\citet{AristoffZhu2014}, \citet{RadinYin2013} among others.}
and it is a
representation of an infinite network, 
where $h$ is a simple symmetric function $h:[0,1]^{2}\rightarrow[0,1]$,
and $h(x,y)=h(y,x)$.
Note that our goal is to approximate $\psi_{n}^{MF}$
and hence $\psi_{n}$ by $\psi$, whose definition 
involves the function $h$, and we call such a function a graphon
in the rest of the paper,
to be consistent with the literature, while we are not attempting
here to establish a theory of graph limits to allow nodal
covariates. That is an interesting research direction
worth investigating in the future, but is out of the scope
of the current paper.

The following proposition shows that for a model with finitely many types
the variational approximation is asymptotically exact.

\begin{proposition}
Under Assumptions \ref{AssumpI} and \ref{AssumpII}, as $n\rightarrow\infty$
\begin{equation*}
\psi_{n}(\alpha,\beta,\gamma) \rightarrow \psi(\alpha,\beta,\gamma),
\end{equation*}
where $\psi(\alpha,\beta,\gamma)$ is defined in \eqref{psi:general}.
\end{proposition}

\begin{proof}
It follows directly from Theorem \ref{thm:mf_bounds} and 
$\psi_{n}^{MF}(\bm{\mu}(\alpha,\beta,\gamma)) \rightarrow \psi(\alpha,\beta,\gamma)$, 
as $n\rightarrow\infty$.
\end{proof}

The proposition states that as $n$ becomes large, we can approximate the
exponential random graph using a model with independent
links (conditional on finitely many types). This is a very useful result
because the latter approximation is simple and tractable, while the exponential random
graph model contains complex dependence patterns that make estimation
computationally expensive. \\

\vspace{1cm}

%%%%%%%%%%%%%%%%%%%%%%%%%%%%%%%%%%%%%%%%%%%%%%%%%
\subsection{Approximation of the limit log normalizing constant}
We can analyze and provide an
approximation of the log-constant in the large network limit. 
The variational formula for $\psi(\alpha,\beta,\gamma)$ is an infinite-dimensional
problem which is intractable in most cases. 
Nevertheless, we can always bound the infinite dimensional problem with finite
dimensional ones (both lower and upper bounds), at least
in the absence of transitivity.
For details, see Proposition \ref{prop:boundsConstant} in the Online Appendix.
The lower-bound in Proposition \ref{prop:boundsConstant} coincides with the structured mean-field approach
of \cite{XingJordanRussell2002}. In a model with
homogeneous players, the lower-bound corresponds to the computational
approximation of graph limits implemented in \cite{ZhengHe2013}.

In the case of extreme homophily, we can also obtain 
finite-dimensional approximation, see Proposition \ref{negativeProp}
in the Online Appendix.

%%%%%%%%%%%%%%%%%%%%%%%%%%%%%%%%%%%%%%%%%%%%%%%%%%%%%
\subsection{Characterization of the variational problem}\label{section:char_vp}

We recall that the log normalizing constant in 
the $n\rightarrow\infty$ limit is given by the variational problem:
\begin{align}\label{VP}
\psi(\alpha,\beta,\gamma)
&=\sup_{h\in\mathcal{W}}
\bigg\{\int_{0}^{1}\int_{0}^{1}\alpha(x,y)h(x,y)dxdy+\frac{\beta}{2}\int_{0}^{1}\int_{0}^{1}\int_{0}^{1}h(x,y)h(y,z)dxdydz
\\
&\qquad\qquad\qquad
+\frac{2\gamma}{3}\int_{0}^{1}\int_{0}^{1}\int_{0}^{1}h(x,y)h(y,z)h(z,x)dxdydz
\nonumber
\\
&\qquad
-\frac{1}{2}\int_{0}^{1}\int_{0}^{1}\left[h(x,y)\log h(x,y)+(1-h(x,y))\log(1-h(x,y))\right]dxdy\bigg\}.
\nonumber
\end{align}

\begin{proposition}
The optimal graphon $h$ that solves the variational problem \eqref{VP} satisfies the Euler-Lagrange equation:
\begin{equation}\label{EL}
2\alpha(x,y)+\beta\int_{0}^{1}h(x,y)dx+\beta\int_{0}^{1}h(x,y)dy
+4\gamma\int_{0}^{1}h(x,z)h(y,z)dz
=\log\left(\frac{h(x,y)}{1-h(x,y)}\right).
\end{equation}
\end{proposition}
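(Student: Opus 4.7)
The plan is to compute the first variation of the functional
\begin{equation*}
J(h) := \iint_{[0,1]^{2}}\alpha(x,y)h(x,y)\,dx\,dy
+ \frac{\beta}{2}T(h) - \frac{1}{2}E(h),
\end{equation*}
where $T(h):=\iiint_{[0,1]^{3}}h(x,y)h(y,z)\,dx\,dy\,dz$ and $E(h):=\iint_{[0,1]^{2}}I(h(x,y))\,dx\,dy$, and to set it equal to zero for an interior maximizer $h\in\mathcal{W}$. First, I would argue that any maximizer takes values strictly inside $(0,1)$: since $I'(t)=\log(t/(1-t))\to-\infty$ as $t\to 0^{+}$ and $\to+\infty$ as $t\to 1^{-}$, while the other two terms have bounded derivatives, a boundary attempt would be improved by moving inward. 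Hence the unconstrained Euler--Lagrange equation applies.

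Next I would perturb $h$ by an arbitrary symmetric test function $\eta:[0,1]^{2}\to\mathbb{R}$ with $\eta(x,y)=\eta(y,x)$, and expand $J(h+\epsilon\eta)$ to first order in $\epsilon$. The linear term in $\alpha$ contributes $\iint\alpha(x,y)\eta(x,y)\,dx\,dy$. The entropy term contributes $-\tfrac{1}{2}\iint I'(h(x,y))\eta(x,y)\,dx\,dy$, and direct computation gives $I'(t)=\log(t/(1-t))$. The key step is the quadratic term: writing $d(y):=\int_{0}^{1}h(x,y)\,dx$, I would use the symmetry of $h$ to rewrite $T(h)=\int_{0}^{1}d(y)^{2}\,dy$, and compute
\begin{equation*}
\delta T = \iiint\big[\eta(x,y)h(y,z)+h(x,y)\eta(y,z)\big]\,dx\,dy\,dz
= \iint \eta(x,y)\big[d(x)+d(y)\big]\,dx\,dy,
\end{equation*}
after relabeling dummy variables in the second triple integral and using $h(x,y)=h(y,x)$.

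Combining the three contributions, the vanishing of the first variation against every symmetric $\eta$ yields the pointwise condition
\begin{equation*}
\alpha(x,y)+\frac{\beta}{2}\big[d(x)+d(y)\big]-\frac{1}{2}\log\!\left(\frac{h(x,y)}{1-h(x,y)}\right)=0,
\end{equation*}
which, after multiplying by $2$ and recognizing that $d(x)=\int_{0}^{1}h(x,y)\,dy$ and $d(y)=\int_{0}^{1}h(x,y)\,dx$ by symmetry of $h$, is exactly \eqref{EL}.

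The main potential obstacle is the correct handling of the symmetric variation for the cubic integral $T(h)$: one must be careful not to double-count and to symmetrize properly so that the resulting gradient is itself symmetric in $(x,y)$. This is why I would replace $T(h)$ by $\int d(y)^{2}\,dy$ at the outset, which makes the symmetry manifest and reduces the computation to a one-dimensional quadratic variation. Everything else is routine: regularity of the integrands is automatic on $\mathcal{W}$ with values in $(0,1)$, and since $I$ is strictly convex the stationarity condition determines $h$ implicitly at each $(x,y)$ in terms of the degree function.
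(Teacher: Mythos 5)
Your proposal is correct and is essentially the argument the paper invokes by reference: the paper's own ``proof'' simply cites Theorem 6.1 of Chatterjee--Diaconis, whose content is exactly the first-variation computation you carry out (rewriting the cubic term as $\int_{0}^{1}d(y)^{2}\,dy$, differentiating, and using the blow-up of $I'$ at $0$ and $1$ to rule out boundary values so that two-sided perturbations are admissible). Your version supplies the details the paper omits, and the symmetrization of the variation of $T(h)$ is handled correctly.
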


\begin{proof}
The proof follows from the same argument as in Theorem 6.1. in \cite{DiaconisChatterjee2011}.
\end{proof}

\begin{corollary}
If $\alpha(x,y)$ is not a constant function, then the optimal graphon $h$ that solves the variational problem \eqref{VP}
is not a constant function.
\end{corollary}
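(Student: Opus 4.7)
The plan is to argue by contradiction. Suppose the optimal graphon is a constant, say $h(x,y)\equiv c$ for some $c\in[0,1]$, and derive that $\alpha$ must also be constant, contradicting the hypothesis.

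First I would handle the interior case $c\in(0,1)$. If $h\equiv c$, then both marginal integrals $\int_0^1 h(x,y)\,dx$ and $\int_0^1 h(x,y)\,dy$ equal $c$, and $\log(h/(1-h))$ reduces to the constant $\log(c/(1-c))$. Substituting into the Euler-Lagrange equation \eqref{EL} yields
\begin{equation*}
2\alpha(x,y) + 2\beta c \;=\; \log\!\left(\frac{c}{1-c}\right) \qquad \text{for a.e. } (x,y)\in[0,1]^2,
\end{equation*}
so $\alpha(x,y) \equiv \tfrac{1}{2}\log(c/(1-c)) - \beta c$ is constant, contradicting the assumption.

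Next I would rule out the boundary cases $c\in\{0,1\}$, for which the Euler-Lagrange equation is degenerate and the proof above does not apply directly. Here I would use a simple perturbation argument: if $h\equiv 0$, consider the perturbed constant graphon $h_\epsilon\equiv \epsilon$ for small $\epsilon>0$. Writing $\bar\alpha=\iint_{[0,1]^2}\alpha(x,y)\,dx\,dy$, the functional in \eqref{VP} evaluated at $h_\epsilon$ equals
\begin{equation*}
\bar\alpha\,\epsilon + \tfrac{\beta}{2}\epsilon^2 - \tfrac{1}{2}\bigl[\epsilon\log\epsilon + (1-\epsilon)\log(1-\epsilon)\bigr].
\end{equation*}
Since $-\tfrac{1}{2}\epsilon\log\epsilon\to 0^+$ at a rate strictly slower than linear, this quantity is strictly positive for all sufficiently small $\epsilon>0$, while the value at $h\equiv 0$ is $0$. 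Hence $h\equiv 0$ is not a maximizer, and an analogous perturbation by $h_\epsilon\equiv 1-\epsilon$ rules out $h\equiv 1$.

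The main (and only) subtlety is the boundary handling, since the Euler-Lagrange derivation tacitly uses $0<h<1$ so that $\log(h/(1-h))$ is finite; once the perturbation argument above rules out $c\in\{0,1\}$, the interior calculation finishes the proof. Everything else is direct substitution into the equation provided by the preceding proposition.
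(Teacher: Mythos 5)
Your proof is correct and follows essentially the same route as the paper: substitute a constant $h$ into the Euler--Lagrange equation \eqref{EL} and conclude $\alpha$ must be constant, a contradiction. The only difference is that you explicitly rule out the degenerate boundary constants $c\in\{0,1\}$ via an entropy perturbation (where $-\epsilon\log\epsilon$ dominates the linear terms), a detail the paper's one-line proof leaves implicit; this is a welcome bit of extra rigor but not a different argument.
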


\begin{proof}
If the optimal graphon $h$ is a constant function, then \eqref{EL} implies that $\alpha$ is a constant function.
Contradiction.
\end{proof}

In general, if a graphon satisfies the Euler-Lagrange equation, that only indicates
that the graphon is a stationary point, and it is not clear if the graphon is the local maximizer, local minimizer or neither.
In the next result, we will show that when $\beta$ is negative, any graphon satisfying the Euler-Lagrange equation in our model
is indeed a local maximizer.

\begin{proposition}
Assume that $\beta<0$ and $\gamma=0$. If $h$ is a graphon that satisfies the Euler-Lagrange equation \eqref{EL}, then $h$ is a local maximizer
of the variational problem \eqref{VP}.
\end{proposition}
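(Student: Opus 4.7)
The plan is a standard second-variation argument. Since $h$ is stationary, the linear term in any admissible perturbation vanishes, so local maximality is determined by the sign of the second variation of the objective functional along symmetric perturbations $\phi$ satisfying $\phi(x,y)=\phi(y,x)$. Before computing it, I would first check that the EL equation \eqref{EL} forces $h$ to be bounded away from $0$ and $1$: rearranging gives $h(x,y) = \sigma\bigl(2\alpha(x,y) + \beta\int h(x,\cdot) + \beta\int h(\cdot,y)\bigr)$ with $\sigma$ the logistic function, and under Assumption \ref{AssumpII} together with $|h|\leq 1$ the argument of $\sigma$ is uniformly bounded. Hence $h(x,y)\in[\eta,1-\eta]$ for some $\eta>0$, and $h+t\phi$ is an admissible graphon for all $t$ small enough and all bounded symmetric $\phi$.

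The next step is to expand the three pieces of the functional to second order in $t$. The linear term contributes nothing to the second variation; the bilinear two-star term $\tfrac{\beta}{2}\iiint h(x,y)h(y,z)\,dx\,dy\,dz$ contributes
\begin{equation*}
\beta \iiint \phi(x,y)\phi(y,z)\,dx\,dy\,dz = \beta \int_0^1\!\left(\int_0^1 \phi(x,y)\,dx\right)^{\!2}\,dy,
\end{equation*}
where the rewriting uses the symmetry of $\phi$; and the entropy term $-\tfrac{1}{2}\iint I(h)$ contributes
\begin{equation*}
-\frac{1}{2}\iint_{[0,1]^2} \frac{\phi(x,y)^2}{h(x,y)(1-h(x,y))}\,dx\,dy,
\end{equation*}
since $I''(x) = 1/[x(1-x)]$. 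Adding the two, the second variation is
\begin{equation*}
\delta^2 F(h;\phi) = \beta \int_0^1\!\left(\int_0^1 \phi(x,y)\,dx\right)^{\!2}\,dy - \frac{1}{2}\iint \frac{\phi(x,y)^2}{h(x,y)(1-h(x,y))}\,dx\,dy.
\end{equation*}

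The conclusion is then immediate: when $\beta<0$ the first term is a nonpositive multiple of a squared $L^2$-norm, and the second term is strictly negative unless $\phi\equiv 0$ (using the uniform lower bound $h(1-h)\geq \eta(1-\eta)>0$ established in the first paragraph). Thus $\delta^2 F(h;\phi)\leq 0$ for every admissible symmetric $\phi$, with equality only for $\phi=0$, and a standard Taylor expansion $F(h+t\phi) = F(h) + \tfrac{t^2}{2}\delta^2 F(h;\phi) + o(t^2)$ shows $h$ is a (strict) local maximizer in, say, the $L^\infty$ topology on $\mathcal{W}$. The only subtle point is the admissibility of the perturbation, handled by the boundary estimate above; everything else is routine and does not use the specific form of $\alpha(x,y)$ beyond its boundedness.
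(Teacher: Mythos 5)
Your proposal is correct and follows essentially the same route as the paper: a second-variation computation showing that the quadratic term in a symmetric perturbation equals $\frac{\beta}{2}\int_0^1\bigl(\int_0^1 g(x,y)\,dy\bigr)^2dx-\frac{1}{4}\iint g^2/[h(1-h)]$, which is nonpositive when $\beta<0$. Your preliminary observation that the Euler--Lagrange equation together with Assumption \ref{AssumpII} pins $h$ into $[\eta,1-\eta]$, so that $h+t\phi$ remains admissible and $I''(h)$ is uniformly bounded, is a useful detail the paper leaves implicit, but it does not change the argument.
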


\begin{proof}
Let us define
\begin{align}
\Lambda[h]&:=\int_{0}^{1}\int_{0}^{1}\alpha(x,y)h(x,y)dxdy+\frac{\beta}{2}\int_{0}^{1}\int_{0}^{1}\int_{0}^{1}h(x,y)h(y,z)dxdydz
\\
&\qquad
-\frac{1}{2}\int_{0}^{1}\int_{0}^{1}\left[h(x,y)\log h(x,y)+(1-h(x,y))\log(1-h(x,y))\right]dxdy.
\nonumber
\end{align}
Let $h$ satisfy \eqref{EL} and
for any symmetric function $g$ and $\epsilon>0$ sufficiently small,
we have
\begin{align}
&\Lambda[h+\epsilon g]-\Lambda[h]
\\
&=\epsilon^{2}\left[\frac{\beta}{2}\int_{0}^{1}\left(\int_{0}^{1}g(x,y)dy\right)^{2}dx
-\frac{1}{4}\int_{0}^{1}\int_{0}^{1}I''(h(x,y))g^{2}(x,y)dxdy\right]+O(\epsilon^{3})
\nonumber
\\
&=\epsilon^{2}\left[\frac{\beta}{2}\int_{0}^{1}\left(\int_{0}^{1}g(x,y)dy\right)^{2}dx
-\frac{1}{4}\int_{0}^{1}\int_{0}^{1}\frac{g^{2}(x,y)}{h(x,y)(1-h(x,y))}dxdy\right]+O(\epsilon^{3}),
\nonumber
\end{align}
and since $\beta<0$, we conclude that $h$ is a local maximizer in \eqref{VP}.
\end{proof}

\begin{remark}\label{beta0}
In general, the variational problem for the graphons
and the corresponding Euler-Lagrange equation \eqref{EL}
does not yield a closed form solution. In the special case $\beta=\gamma=0$,
\begin{equation}
\psi(\alpha,0,0)=\sup_{h\in\mathcal{W}}\left\{ \iint_{[0,1]^{2}}\alpha(x,y)h(x,y)dxdy-\frac{1}{2}\iint_{[0,1]^{2}}I(h(x,y))dxdy\right\} ,
\end{equation}
where $I(x):=x\log x+(1-x)\log(1-x)$ and it is easy to see that the
optimal graphon $h(x,y)$ is given by
$h(x,y)=\frac{e^{2\alpha(x,y)}}{e^{2\alpha(x,y)}+1}$,
and therefore, $\psi(\alpha,0,0)=\frac{1}{2}\iint_{[0,1]^{2}}\log(1+e^{2\alpha(x,y)})dxdy$.
\end{remark}

\section{Details of Equilibrium Economic Foundations}\label{section:appendix_microfoundation}

\subsection{Setup and preferences}
Consider a population of $n$ heterogeneous players (the nodes), each
characterized by an exogenous
type $\tau_i\in \otimes_{j=1}^{S}\mathcal{X}_{j}$, $i=1,...,n$. 
The attribute $\tau_i$ is an $S$-dimensional
vector and the sets $\mathcal{X}_{j}$ can represent age, race, gender,
income, etc. 
\footnote{For instance, if we consider gender and income, 
then $S=2$, and we can take $\otimes_{j=1}^{2}\mathcal{X}_{j}=\{\text{male,female}\}\times\{\text{low, medium, high}\}$.
The sets $\mathcal{X}_{j}$ can be both discrete and continuous. For example, if we consider gender and income, 
we can also take $\otimes_{j=1}^{2}\mathcal{X}_{j}=\{\text{male,female}\}\times\text{[\$50,000,\$200,000]}$. Below we restrict the covariates to be discrete, but we allow the number of types to grow with the size of the network.}
We collect all $\tau_i$'s in an $n\times S$ matrix $\tau$. The network's
 adjacency matrix $g$ has entries
$g_{ij}=1$ if $i$ and $j$ are linked; and $g_{ij}=0$ otherwise.
The network is undirected, i.e. $g_{ij}=g_{ji}$, and $g_{ii}=0$,
for all $i$'s.\footnote{Extensions to directed networks
are straightforward (see \citet{Mele2010a}).} The utility of player $i$ is
\begin{equation}
\label{eq:utility_fcn_appendix}
u_{i}(g,\tau)=\sum_{j=1}^{n}\alpha_{ij}g_{ij}
+\frac{\beta}{n}\sum_{j=1}^{n}\sum_{k=1}^{n}g_{ij}g_{jk},
\end{equation}
where $\alpha_{ij}:=\nu(\tau_i, \tau_j)$ are symmetric functions
$\nu:\otimes_{j=1}^{S}\mathcal{X}_{j}\times\otimes_{j=1}^{S}\mathcal{X}_{j}\rightarrow\mathbb{R}$
and $\nu(\tau_i,\tau_j)=\nu(\tau_j,\tau_i)$ for all $i,j$;
and $\beta$ is a scalar.
The utility of player $i$
depends on the number of direct links, each weighted according to a function $\nu$ of the types $\tau$. 
This payoff structure implies that the net benefit of forming a direct connection depends on the characteristics of the two individuals 
involved in the link. 

Players also care about the number of links that each of their
direct contacts have formed.\footnote{The normalization of $\beta$ by $n$
is necessary for the asymptotic analysis.} For example, when $\beta>0$, there is an incentive to form links to people that have many friends, e.g. popular kids in school. On the other hand, when $\beta<0$ the incentive is reversed. For example, one can think that forming links to a person with many connections could decrease our visibility and decrease the effectiveness of interactions. Similar utility functions
have been used extensively in the empirical network formation
literature.\footnote{See \cite{Mele2010a}, \cite{Sheng2012},
\cite{DePaulaEtAl2012}, \cite{ChandrasekharJackson2012}, \cite{Badev2013},
\cite{Butts2009}. }

The preferences in \eqref{eq:utility_fcn_appendix} include only direct links and friends' populatity. However, we can also include other types of link externalities. For example, in many applications the researcher is interested in estimating preferences for common neighbors. This is an important network statistics to measure transitity and clustering in networks. In our model we can easily add an utility component to capture these effects.
\begin{equation}
\label{eq:utility_fcn_triangles_appendix}
u_{i}(g,\tau)=\sum_{j=1}^{n}\alpha_{ij}g_{ij}
+\frac{\beta}{n}\sum_{j=1}^{n}\sum_{k=1}^{n}g_{ij}g_{jk} 
+ \frac{\gamma}{n}\sum_{j=1}^{n}\sum_{k=1}^{n}g_{ij}g_{jk}g_{ki},
\end{equation}
These preferences include an additional parameter $\gamma$ that measures the effect of common neighbors. The potential function for this model is 
\begin{equation}
\label{eq:potential_triangles_appendix}
Q_{n}(g;\alpha,\beta)=\sum_{i=1}^{n}\sum_{j=1}^{n}\alpha_{ij}g_{ij}+\frac{\beta}{2n}\sum_{i=1}^{n}\sum_{j=1}^{n}\sum_{k=1}^{n}g_{ij}g_{jk} 
+ \frac{2\gamma}{3n}\sum_{j=1}^{n}\sum_{k=1}^{n}g_{ij}g_{jk}g_{ki} .
\end{equation}
In general, all the results that we show below extend to more general utility functions that include payoffs for link externalities similar to (\ref{eq:utility_fcn_triangles}).

The probability that $i$ and $j$ meet can depend on their networks: it could be a function of their common neighbors, or a function of their degrees and centralities, for example. In Assumption \ref{assm:meeting}, we assume that the existence of a link between $i$ and $j$ does not affect their probability of meeting. This is because we prove the existence and functional form of the stationary distribution (\ref{eq:stat_distribution}) using the detailed balance condition, which is not satisfied if we allow the meeting probabilities to depend  on the link between $i$ and $j$.  

The model can easily be extended to directed networks and the results on equilibria and long-run stationary distribution will hold. The results about the approximations of the likelihood shown below will also hold for directed networks, with minimal modifications of the proofs.

Finally, while our model generates dense graphs, the approximations using variational methods and nonlinear large deviations that we develop in the rest of the paper also work in moderately sparse graphs.
More precisely, the utility of player $i$ is given by
\begin{equation}
u_{i}(g,\tau)=\sum_{j=1}^{n}\alpha_{ij}^{(n)}g_{ij}
+\frac{\beta^{(n)}}{n}\sum_{j=1}^{n}\sum_{k=1}^{n}g_{ij}g_{jk}
+\frac{\gamma^{(n)}}{n}\sum_{j=1}^{n}\sum_{k=1}^{n}g_{ij}g_{jk}g_{ki},
\end{equation}
where $|\alpha_{ij}^{(n)}|$, $|\beta^{(n)}|$ and $|\gamma^{(n)}|$ can have moderate growth in $n$ instead of being bounded. 
We will give more details later in our paper. 
\footnote{See \cite{ChatterjeeDembo2014} for additional applications of nonlinear large deviations.}

\begin{example}\label{example:homophily}(Homophily)
Consider a model with $\nu(\tau_i,\tau_j)=V-c(\tau_i,\tau_j)$, where $V>0$ is
the benefit of a link and $c(\tau_i,\tau_j)$ ($=c(\tau_j,\tau_i)$) is
the cost of the link
between $i$ and $j$. To model homophily in this
framework let the cost function be
\begin{equation}
c(\tau_i,\tau_j) = \begin{cases}
c & \mbox{if \ensuremath{\tau_{i}=\tau_{j}}},\\
C & \mbox{if \ensuremath{\tau_{i}\neq\tau_{j}}}.
\end{cases}
\end{equation}
For example, consider the parameterization
$0<c<V<C$ and $\beta=0$, $\gamma=0$. In this case the players have no incentive to form links with agents of other groups. On the other hand, if we have
$0<c<V<C$ and $\beta,\gamma>0$, also links across groups will be formed, as
long as $\beta,\gamma$ are sufficiently large.
\end{example}

\begin{example}\label{example:social_distance} (Social Distance Model)
Let the payoff from direct links be a function of the social distance
among the individuals. Formally, let $\nu(\tau_i,\tau_j) := \eta d(\tau_i,\tau_j) - c$,
where $d(\tau_i,\tau_j)$ is a distance function, $\eta$ is a
parameter that determines the sensitivity to the social distance
and $c>0$ is the cost of forming a link.\footnote{See \cite{IijimaKamada2014} for
a more general example of such model.} The case with $\eta<0$ represents
a world where individuals prefer linking to similar agents
and $\eta>0$ represents a world where individuals prefer linking
with people at larger social distance.
Note that even when $\eta<0$, if we have $\beta,\gamma>0$ sufficiently large, individuals may still have an incentive
to form links with people at larger social distance.
\end{example}

\subsection{Meetings and equilibrium}
The network formation process follows a stochastic best-response
dynamics:\footnote{See \cite{Blume1993}, \cite{Mele2010a}, \cite{Badev2013}.}
in each period $t$, two random players meet with probability $\rho_{ij}$;
upon meeting they have the opportunity to form a link (or delete it,
if already in place). Players are myopic:
when they form a new link, they do not consider
how the new link will affect the incentives of the other
player in the future evolution of the network.\\

\begin{axiom}\label{assm:meeting}
The meeting process is a function of types and the network. Let $g_{-ij}$ indicate the network $g$ without considering the link $g_{ij}$. Then the probability that $i$ and $j$ meet is
\begin{equation}
\rho_{ij} := \rho(\tau_i,\tau_j, g_{-ij}) > 0
\end{equation}  
for all pairs $i$ and $j$,  and  i.i.d. over time.
\end{axiom}
Assumption \ref{assm:meeting} implies that the meeting process
can depend on covariates and the state of the network. For example, if two players have many friends in common they may meet with high probability; or people that share some demographics may meet more often. Crucially, every pair of players has a strictly
positive probability of meeting. This guarantees that
each link of the network has the opportunity of being revised.\\ 
\indent Upon meetings, players decide whether to form or delete a link by
maximizing the sum of their current utilities, i.e. the total
surplus generated by the relationship. We are implicitly
assuming that individuals can transfer utilities. When deciding
whether to form a new link or deleting an existing
link, players receive a random matching shock $\varepsilon_{ij}$
that shifts their preferences.\\
\indent At time $t$, the links $g_{ij}$ is formed if
\begin{equation*}
u_i(g_{ij}=1, g_{-ij},\tau) + u_j(g_{ij}=1,g_{-ij},\tau) + \varepsilon_{ij}(1)\geq
u_i(g_{ij}=0, g_{-ij},\tau) + u_j(g_{ij}=0,g_{-ij},\tau) + \varepsilon_{ij}(0)\,.
\end{equation*}
We make the following assumptions on the matching value.
\begin{axiom}\label{assm:logisticshocks}Individuals receive a
\textit{logistic shock} before they decide whether to form a link (i.i.d. over time
and players).\end{axiom}
The logistic assumption is standard in many discrete choice models in economics and statistics (\cite{Train2009}). \\
\indent We can now characterize the equilibria of the model, following
 \cite{Mele2010a} and \cite{ChandrasekharJackson2012}.
In particular, we can show that the
network formation is a
potential game (\citet{MondererShapley1996}).
\begin{proposition}\label{prop:potential_appendix}
The network formation is a potential game, and
there exists a potential function $Q_{n}(g;\alpha,\beta)$
that characterizes the incentives of all the players in any state
of the network
\begin{equation}
\label{eq:potential_appendix}
Q_{n}(g;\alpha,\beta)=\sum_{i=1}^{n}\sum_{j=1}^{n}\alpha_{ij}g_{ij}+\frac{\beta}{2n}\sum_{i=1}^{n}\sum_{j=1}^{n}\sum_{k=1}^{n}g_{ij}g_{jk}
+\frac{2\gamma}{3n}\sum_{i=1}^{n}\sum_{j=1}^{n}\sum_{k=1}^{n}g_{ij}g_{jk}g_{ki}.
\end{equation}
\end{proposition}

\begin{proof}
The proposition follows the same lines as Proposition 1
in \cite{Mele2010a} and it is omitted for brevity.
\end{proof}

The potential function $Q_{n}(g;\alpha,\beta)$ is such that, for any $g_{ij}$
\[
Q_{n}(g;\alpha,\beta)-Q_{n}(g-ij;\alpha,\beta)=u_{i}(g)+u_{j}(g)-\left[u_{i}(g-ij)+u_{j}(g-ij)\right].
\]
Thus we can keep track of all players' incentives using
the scalar $Q_{n}(g;\alpha,\beta)$. It is easy to show
 that all the pairwise stable
(with transfers) networks are the local maxima of the potential
function.\footnote{A network $g$ is pairwise stable with transfers if:
(1) $g_{ij}=1\Rightarrow u_{i}(g,\tau)+u_{j}(g,\tau)\geq u_{i}(g-ij,\tau)+u_{j}(g-ij,\tau)$
and (2) $g_{ij}=0\Rightarrow u_{i}(g,\tau)+u_{j}(g,\tau)\geq u_{i}(g+ij,\tau)+u_{j}(g+ij,\tau)$;
where $g+ij$ represents network $g$ with the addition of link $g_{ij}$
and network $g-ij$ represents network $g$ without link $g_{ij}$. See
\citet{Jackson2008} for more details.} The sequential network formation
follows a \textit{Glauber} dynamics, therefore converging
to a unique stationary distribution.
\begin{theorem}
\label{thm:uniquestatdist}In the long run, the model converges to
the stationary distribution $\pi_{n}$, defined as
\begin{equation}
\label{eq:stat_distribution_appendix}
\pi_{n}(g;\alpha,\beta)=\frac{\exp\left[Q_{n}(g;\alpha,\beta)\right]}
{\sum_{\omega\in\mathcal{G}}\exp\left[Q_{n}(\omega;\alpha,\beta)\right]}
=\exp\left\{ n^{2}\left[T_{n}(g;\alpha,\beta)-\psi_{n}(\alpha,\beta)\right]\right\},
\end{equation}
where $T_{n}(g;\alpha,\beta)=n^{-2} Q_{n}(g;\alpha,\beta)$,
\begin{equation}\label{psi_n_appendix}
\psi_{n}(\alpha,\beta)=\frac{1}{n^{2}}\log\sum_{\omega\in\mathcal{G}}\exp\left[n^{2}T_{n}(\omega;\alpha,\beta)\right],
\end{equation}
and $\mathcal{G}:=\{\omega=(\omega_{ij})_{1\leq i,j\leq n}:\omega_{ij}=\omega_{ji}\in\{0,1\}, \omega_{ii}=0, 1\leq i,j\leq n\}$.
\end{theorem}
\begin{proof} The proof is an extension of Theorem 1 in \cite{Mele2010a}.
See also \cite{ChandrasekharJackson2012} and \cite{Butts2009}.
\end{proof}
Notice that the likelihood (\ref{eq:stat_distribution})
corresponds to an ERGM model with heterogeneous nodes and two-stars.
As a consequence our model inherits all the estimation and identification
challenges of the ERGM model.\\

%%%%%%%%%%%%%%%%%%%%%%%%%%%%%%%

\section{Special Case: The Edge-Star Model}

The general solution of the variational problem (\ref{eqn:varproblem_CD2013}) is complicated. However, there are some special cases where we can characterize the solution with extreme detail. These examples show how we can solve the variational approximation in stylized settings, and we use them to explain how the method works in practice.
In this section, we consider the special case in the absence
of transitivity, i.e. $\gamma=0$ and we get further results
for the edge-star model.

\subsection{Extreme homophily}

We can exploit homophily to obtain a tractable
approximation. Suppose that there are $M$ types in
the population. The cost of forming
links among individuals of the same group is finite, but there is
a large cost of forming links among people of different groups (potentially infinite).
We show that in this case the normalizing constant can be approximated
by solving $M$ independent univariate maximization problems.
In the special case of extreme homophily, our model converges to a block-diagonal model.

\begin{proposition}\label{negativeProp} Let $0=a_{0}<a_{1}<\cdots<a_{M}=1$
be a given sequence. Assume that
\begin{equation}
\alpha(x,y)=\alpha_{mm},\qquad\text{if }a_{m-1}<x,y<a_{m},\qquad m=1,2,\ldots,M.
\end{equation}
and $\alpha(x,y)\leq-K$ otherwise is a given function. Let $\psi(\alpha,\beta,0;-K)$ be
the variational problem for the graphons and
$\psi(\alpha,\beta,0;-\infty)=\lim_{K\rightarrow\infty}\psi(\alpha,\beta,0;-K)$.
Then, we have
\begin{equation}
\psi(\alpha,\beta,0;-\infty)=\sum_{m=1}^{M}(a_{m}-a_{m-1})^{2}\sup_{0\leq x\leq1}\left\{ \alpha_{mm}x+\frac{\beta}{2}x^{2}-\frac{1}{2}I(x)\right\} .
\label{eq:extrhomophily}
\end{equation}
\end{proposition}

\begin{proof}
First, observe that
\begin{align}
 & \psi(\alpha,\beta,0;-\infty)\\
 & =\sup_{h\in\mathcal{W}^{-}}\bigg\{\sum_{i=1}^{M}\alpha_{i}\iint_{[a_{i-1},a_{i}]^{2}}h(x,y)dxdy+\frac{\beta}{2}\int_{0}^{1}\int_{0}^{1}h(x,y)h(y,z)dxdydz\nonumber \\
 & \qquad\qquad\qquad-\frac{1}{2}\sum_{i=1}^{M}\iint_{[a_{i-1},a_{i}]^{2}}I(h(x,y))dxdy\bigg\}\nonumber \\
 & =\sup_{h\in\mathcal{W}^{-}}\bigg\{\sum_{i=1}^{M}\alpha_{i}\iint_{[a_{i-1},a_{i}]^{2}}h(x,y)dxdy+\frac{\beta}{2}\sum_{i=1}^{M}\int_{a_{i-1}}^{a_{i}}\left(\int_{a_{i-1}}^{a_{i}}h(x,y)dy\right)^{2}dx\nonumber \\
 & \qquad\qquad\qquad-\frac{1}{2}\sum_{i=1}^{M}\iint_{[a_{i-1},a_{i}]^{2}}I(h(x,y))dxdy\bigg\}\nonumber \\
 & =\sum_{i=1}^{M}\sup_{\substack{h:[a_{i-1},a_{i}]^{2}\rightarrow[0,1]\\
h(x,y)=h(y,x)
}
}\bigg\{\alpha_{i}\iint_{[a_{i-1},a_{i}]^{2}}h(x,y)dxdy+\frac{\beta}{2}\int_{a_{i-1}}^{a_{i}}\left(\int_{a_{i-1}}^{a_{i}}h(x,y)dy\right)^{2}dx\nonumber \\
 & \qquad\qquad\qquad-\frac{1}{2}\iint_{[a_{i-1},a_{i}]^{2}}I(h(x,y))dxdy\bigg\},\nonumber
\end{align}
where
\begin{equation}
\mathcal{W}^{-}:=\left\{ h\in\mathcal{W}:h(x,y)=0\text{ for any }(x,y)\notin\bigcup_{i=1}^{M}[a_{i-1},a_{i}]^{2}\right\} .
\end{equation}
By taking $h$ to be a constant on $[a_{i-1},a_{i}]^{2}$, it is clear
that
\begin{equation}
\psi(\alpha,\beta,0;-\infty)\geq\sum_{i=1}^{M}(a_{i}-a_{i-1})^{2}\sup_{0\leq x\leq1}\left\{ \alpha_{i}x+\frac{\beta}{2}x^{2}-\frac{1}{2}I(x)\right\} .
\end{equation}
By Jensen's inequality
\begin{align}
\psi(\alpha,\beta,0;-\infty) & \leq\sum_{i=1}^{M}\sup_{\substack{h:[a_{i-1},a_{i}]^{2}\rightarrow[0,1]\\
h(x,y)=h(y,x)
}
}\bigg\{\alpha_{i}\int_{a_{i-1}}^{a_{i}}\left(\int_{a_{i-1}}^{a_{i}}h(x,y)dy\right)dx\\
 & \qquad\qquad+\frac{\beta}{2}\int_{a_{i-1}}^{a_{i}}\left(\int_{a_{i-1}}^{a_{i}}h(x,y)dy\right)^{2}dx\nonumber \\
 & \qquad\qquad\qquad-\frac{1}{2}(a_{i}-a_{i-1})\int_{a_{i-1}}^{a_{i}}I\left(\frac{1}{a_{i}-a_{i-1}}\int_{a_{i-1}}^{a_{i}}h(x,y)dy\right)dx\bigg\}\nonumber \\
 & \leq\sum_{i=1}^{M}(a_{i}-a_{i-1})^{2}\sup_{0\leq x\leq1}\left\{ \alpha_{i}x+\frac{\beta}{2}x^{2}-\frac{1}{2}I(x)\right\} .\nonumber
\end{align}
\end{proof}

The net benefit function $\alpha(x,y)$ assumed in the Proposition is shown in Figure \ref{fig:partitionsalpha}(D).
Essentially this result means that with extreme homophily,
we can approximate the model, assuming perfect segregation:
thus we can independently solve the variational problem of each type. This approach is computationally very simple, since each variational problem becomes a univariate maximization problem. \\
\indent The solution of such univariate problem has been studied and characterized in previous work by \cite{DiaconisChatterjee2011}, \cite{RadinYin2013}, \cite{AristoffZhu2014} and \cite{Mele2010a}. It can be shown that the solutions $\mu_{m}^{\ast}$, where $m=1,..,M$, are the fixed point of equations
\begin{equation}
\mu_{m} = \frac{\exp\left[\alpha_{mm}+\beta \mu_{m}\right]}{1+\exp\left[\alpha_{mm}+\beta \mu_{m}\right]}\,,
\end{equation}
for each group $m$, and $\beta\mu_{m}^{\ast}(1-\mu_{m}^{\ast})<1$. 
The global maximizer $\mu_{m}^{\ast}$ is unique except
on a phase transition curve $\{(\alpha_{mm},\beta):\alpha_{mm}+\beta=0,\alpha_{mm}<-1\}$, see e.g. \cite{RadinYin2013,AristoffZhu2014}. 
It is shown in \cite{DiaconisChatterjee2011} that the network of each group corresponds to an Erd\H{o}s-R\'{e}nyi graph with probability of a link equal to $\mu_{m}^{\ast}$.

%%%%%%%%%%%%%%%%%%%%%%%%%%%%%%%%%%%%%%%%%%
\subsection{Analytically Tractable Bounds}

In this section, for the edge-star model, we provide
analytically tractable bounds for $\psi(\alpha,\beta,\gamma)$
when $\gamma=0$.

\begin{proposition}
\label{prop:boundsConstant}
Let $\gamma=0$ and $0=a_{0}<a_{1}<\cdots<a_{M-1}<a_{M}=1$ be a given sequence. Let us assume that
\begin{equation*}
\alpha(x,y)=\alpha_{ml},\qquad\text{if $a_{m-1}<x<a_{m}$ and $a_{l-1}<y<a_{l}$, where $1\leq m,l\leq M$}.
\end{equation*}
Then, we have
\begin{align*}
&\sup_{\substack{0\leq u_{ml}\leq 1
\\
u_{ml}=u_{lm}, 1\leq m,l\leq M}}\sum_{m=1}^{M}(a_{m}-a_{m-1})
\bigg\{\sum_{l=1}^{M}(a_{l}-a_{l-1})\alpha_{ml}u_{ml}
\\
&\qquad
+\frac{\beta}{2}\left(\sum_{l=1}^{M}(a_{l}-a_{l-1})u_{ml}\right)^{2}
-\frac{1}{2}\sum_{l=1}^{M}(a_{l}-a_{l-1})I(u_{ml})\bigg\}
\nonumber
\\
&\leq
\psi(\alpha,\beta,0)
\leq\sum_{m=1}^{M}(a_{m}-a_{m-1})\sup_{\substack{0\leq u_{ml}\leq 1
\\
1\leq l\leq M}}
\bigg\{\sum_{l=1}^{M}(a_{l}-a_{l-1})\alpha_{ml}u_{ml}
+\frac{\beta}{2}\left(\sum_{l=1}^{M}(a_{l}-a_{l-1})u_{ml}\right)^{2}
\nonumber
\\
&\qquad\qquad\qquad\qquad
-\frac{1}{2}\sum_{l=1}^{M}(a_{l}-a_{l-1})I(u_{ml})\bigg\}.
\nonumber
\end{align*}
\end{proposition}

\begin{proof}
To compute the lower and upper bounds, let us define
\begin{equation}
u_{ij}(x)=\frac{1}{a_{j}-a_{j-1}}\int_{a_{j-1}}^{a_{j}}h(x,y)dy,\qquad\text{for any $a_{i-1}<x<a_{i}$}.
\end{equation}
We can compute that
\begin{equation}\label{s1}
\iint_{[0,1]^{2}}\alpha(x,y)h(x,y)dxdy
=\sum_{i=1}^{M}\sum_{j=1}^{M}(a_{j}-a_{j-1})\int_{a_{i-1}}^{a_{i}}\alpha_{ij}u_{ij}(x)dx.
\end{equation}
Moreover,
\begin{align}\label{s2}
\frac{\beta}{2}\int_{0}^{1}\int_{0}^{1}\int_{0}^{1}h(x,y)h(y,z)dxdydz
&=\frac{\beta}{2}\int_{0}^{1}\left(\int_{0}^{1}h(x,y)dy\right)^{2}dx
\\
&=\frac{\beta}{2}\sum_{i=1}^{M}\int_{a_{i-1}}^{a_{i}}\left(\sum_{j=1}^{M}(a_{j}-a_{j-1})u_{ij}(x)\right)^{2}dx.
\nonumber
\end{align}
By Jensen's inequality, we can also compute that
\begin{align}\label{s3}
\frac{1}{2}\int_{0}^{1}\int_{0}^{1}I(h(x,y))dxdy
&=\frac{1}{2}\sum_{i=1}^{M}\int_{a_{i-1}}^{a_{i}}\left[\sum_{j=1}^{M}\int_{a_{j-1}}^{a_{j}}I(h(x,y))dy\right]dx
\\
&=\frac{1}{2}\sum_{i=1}^{M}\int_{a_{i-1}}^{a_{i}}\left[\sum_{j=1}^{M}
(a_{j}-a_{j-1})\frac{1}{a_{j}-a_{j-1}}\int_{a_{j-1}}^{a_{j}}I(h(x,y))dy\right]dx
\nonumber
\\
&\geq\frac{1}{2}\sum_{i=1}^{M}\int_{a_{i-1}}^{a_{i}}\left[\sum_{j=1}^{M}
(a_{j}-a_{j-1})I\left(\frac{1}{a_{j}-a_{j-1}}\int_{a_{j-1}}^{a_{j}}h(x,y)dy\right)\right]dx
\nonumber
\\
&=\frac{1}{2}\sum_{i=1}^{M}\int_{a_{i-1}}^{a_{i}}\sum_{j=1}^{M}
(a_{j}-a_{j-1})I(u_{ij}(x))dx
\nonumber
\end{align}
Hence, by \eqref{s1}, \eqref{s2}, \eqref{s3}, we get
\begin{align}
\psi(\alpha,\beta,0)
&\leq
\sum_{i=1}^{M}\sum_{j=1}^{M}(a_{j}-a_{j-1})\int_{a_{i-1}}^{a_{i}}\alpha_{ij}u_{ij}(x)dx
+\frac{\beta}{2}\sum_{i=1}^{M}\int_{a_{i-1}}^{a_{i}}\left(\sum_{j=1}^{M}(a_{j}-a_{j-1})u_{ij}(x)\right)^{2}dx
\nonumber
\\
&\qquad\qquad
-\frac{1}{2}\sum_{i=1}^{M}\int_{a_{i-1}}^{a_{i}}\sum_{j=1}^{M}
(a_{j}-a_{j-1})I(u_{ij}(x))dx
\nonumber
\\
&\leq\sum_{i=1}^{M}(a_{i}-a_{i-1})\sup_{\substack{0\leq u_{ij}\leq 1
\\
1\leq j\leq M}}
\bigg\{\sum_{j=1}^{M}(a_{j}-a_{j-1})\alpha_{ij}u_{ij}
+\frac{\beta}{2}\left(\sum_{j=1}^{M}(a_{j}-a_{j-1})u_{ij}\right)^{2}
\nonumber
\\
&\qquad\qquad\qquad\qquad
-\frac{1}{2}\sum_{j=1}^{M}(a_{j}-a_{j-1})I(u_{ij})\bigg\}
\nonumber
\end{align}
On the other hand,
by restricting the supremum over the graphons $h(x,y)$
\begin{equation}
h(x,y)=
u_{ij},\qquad\text{if $a_{i-1}<x<a_{i}$ and $a_{j-1}<y<a_{j}$, where $1\leq i,j\leq M$},
\end{equation}
where $(u_{ij})_{1\leq i,j\leq M}$ is a symmetric matrix of the constants,
and optimize over all the possible values $0\leq u_{ij}\leq 1$, we get the lower bound:
\begin{align}
\psi(\alpha,\beta,0)
&\geq\sup_{\substack{0\leq u_{ij}\leq 1
\\
u_{ij}=u_{ji}, 1\leq i,j\leq M}}\sum_{i=1}^{M}(a_{i}-a_{i-1})
\bigg\{\sum_{j=1}^{M}(a_{j}-a_{j-1})\alpha_{ij}u_{ij}
\\
&\qquad
+\frac{\beta}{2}\left(\sum_{j=1}^{M}(a_{j}-a_{j-1})u_{ij}\right)^{2}
-\frac{1}{2}\sum_{j=1}^{M}(a_{j}-a_{j-1})I(u_{ij})\bigg\}.
\nonumber
\end{align}
\end{proof}

%%%%%%%%%%%%%%%%%%%%%%%%%%%%%%%%%%%%%%%%%%%%%%%%%%%%%%%%%%%%%%%%%%

%\bibliographystyle{jmr}
%\bibliography{thesisbib}

\end{document}